\documentclass[lettersize,journal]{IEEEtran}
\usepackage{amsmath,amsfonts,amsthm}
\usepackage{algorithmic}
\usepackage{amssymb}
\usepackage{algorithm}
\usepackage{array}
\usepackage[caption=false,font=scriptsize,labelfont=sf,textfont=sf]{subfig}
\usepackage{textcomp}
\usepackage{stfloats}
\usepackage{url}
\usepackage{verbatim}
\usepackage{graphicx}
\usepackage{cite}
\usepackage{multirow}
\usepackage{pifont}
\usepackage{textcomp}
\usepackage{booktabs}
\usepackage{makecell}
\usepackage{ulem}
\usepackage{threeparttable}
\newtheorem{theorem}{\bf Theorem}
\newtheorem{definition}{\bf Definition}
\newtheorem{corollary}{\bf Corollary}
\newtheorem{assumption}{\bf Assumption}

\makeatletter

\newcommand{\Rmnum}[1]{\expandafter\@slowromancap\romannumeral #1@}
\newcommand{\NOALGCOMMENT}[1]{\noindent/* #1 */}
\newcommand{\NOBLGCOMMENT}[1]{\hspace*{1em} \(\triangleright\) #1}

\makeatother


\hyphenation{op-tical net-works semi-conduc-tor IEEE-Xplore}
\usepackage{acronym}
\usepackage{xspace}
\usepackage{xcolor}

\newcommand{\ie}{i.e.,\xspace}
\newcommand{\eg}{e.g.,\xspace}

\newcommand{\etal}{\textit{et al.}\xspace}
\newcommand{\eat}[1]{}

\acrodef{iot}[IoT]{Internet of Things}


\newcommand{\change}[1]{\textcolor{cyan}{#1}}

\newcommand{\tool}{{LaDP-FL}\xspace}
\acrodef{sgd}[FedSGD]{Federated Stochastic Gradient Descent}

\acrodef{avg}[FedAvg]{Federated Averaging}


\newcommand{\revise}[1]{\textcolor{black}{#1}}
\newcommand{\mapping}[1]{}

\newcommand{\heading}[1]{\vspace{6pt}\noindent{\underline{\textsc{#1}}}}
\newcommand{\code}{\url{https://anonymous.4open.science/r/LaDP-92D7}}
\begin{document}
\title{\tool: Local Layer-wise Differential Privacy in Federated Learning}

\author{Yunbo Li,~\IEEEmembership{Student Member,~IEEE,}
        Jiaping Gui,~\IEEEmembership{Member,~IEEE,}
        Fanchao Meng,
        Yue Wu,~\IEEEmembership{Senior Member,~IEEE}

\thanks{Yunbo Li, Jiaping Gui, Fanchao Meng, and Yue Wu are with Shanghai Jiao Tong University, Shanghai, 200240, China (e-mail: \{li-yun-bo, jgui, mactavishmeng, wuyue\}@sjtu.edu.cn).}
}

\markboth{Journal of \LaTeX\ Class Files,~Vol.~14, No.~8, August~2021}%
{Shell \MakeLowercase{\textit{et al.}}: A Sample Article Using IEEEtran.cls for IEEE Journals}


\maketitle

\begin{abstract}
Federated Learning (FL) enables collaborative model training without direct data sharing, yet it remains vulnerable to privacy attacks such as model inversion and membership inference. \revise{Existing differential privacy (DP) solutions for FL often inject noise uniformly across the entire model, degrading utility while providing suboptimal privacy-utility tradeoffs. To address this, we propose \tool, a novel layer-wise adaptive noise injection mechanism for FL that optimizes privacy protection while preserving model accuracy. \tool leverages two key insights: (1) neural network layers contribute unevenly to model utility, and (2) layer-wise privacy leakage can be quantified via KL divergence between local and global model distributions. \tool dynamically injects noise into selected layers based on their privacy sensitivity and importance to model performance.}

\revise{We provide a rigorous theoretical analysis, proving that \tool satisfies $(\epsilon, \delta)$-DP guarantees and converges under bounded noise. Extensive experiments on CIFAR-10/100 datasets demonstrate that \tool reduces noise injection by 46.14\% on average compared to state-of-the-art (SOTA) methods while improving accuracy by 102.99\%. Under the same privacy budget, \tool outperforms SOTA solutions like Dynamic Privacy Allocation LDP and AdapLDP by 25.18\% and 6.1\% in accuracy, respectively. Additionally, \tool robustly defends against reconstruction attacks, increasing the FID of the reconstructed private data by $>$12.84\% compared to all baselines. Our work advances the practical deployment of privacy-preserving FL with minimal utility loss.}

\end{abstract}

\begin{IEEEkeywords}
Federated Learning, Differential Privacy, Model Inverse Attack, Layer-grained.
\end{IEEEkeywords}

\section{Introduction}

\IEEEPARstart{F}{ederated} learning (FL)~\cite{mcmahan2017communication} has become a paradigm for collaborative machine learning, which enables participants to exchange model parameters or gradient updates without directly sharing their private data with a central aggregating server. With the capability of preserving participant privacy and addressing the issue of data isolation, FL has attracted substantial interest from researchers in recent years~\cite{karimireddy2020scaffold,li2019convergence,ezzeldin2023fairfed}. \mapping{Comment 1.2 in R1}However, FL systems encounter practical challenges in providing comprehensive privacy protection for all participants to prevent various privacy attacks, such as deep leakage from gradients~\cite{zhao2020idlg, wei2021gradient}, source inference attacks~\cite{hu2023source, li2024subject}, model inversion attacks~\cite{hitaj2017deep, wu2024fedinverse}, and membership inference attacks~\cite{shokri2017membership,hu2022m}. Consequently, the private data of participants remains a significant privacy risk.

To mitigate the aforementioned privacy concerns, researchers have proposed different solutions that can be categorized into two types of mechanisms: homomorphic encryption (HE)~\cite{zhang2020batchcrypt, ma2022privacy} and differential privacy (DP)~\cite{el2022differential, truex2020ldp}. HE is a technique that encrypts the plaintext and ensures that the result of performing operations on the ciphertext is equivalent to that of performing corresponding operations on the plaintext. Through encryption, this mechanism effectively prevents malicious servers from intercepting or stealing model information during transmission. However, HE can incur significant resource overhead~\cite{xue2023differentially}, which limits its application in the real world. In contrast, DP prevents adversaries from inferring private model information by rendering neighboring datasets indistinguishable. Specifically, DP introduces constrained noise to model parameters and then obfuscates them. A typical method to implement DP in FL is to inject noise into local models, also known as local differential private federated learning (LDP-FL)~\cite{truex2020ldp, yuan2023amplitude}. Despite the promise to protect the privacy of each participant, LDP-FL may significantly impair the performance of the global model~\cite{el2022differential} since the noise is accumulated on the server side.

Extensive research~\cite{zhou2022pflf, fu2022adap, wei2023securing, yuan2023amplitude, xue2023differentially} has been conducted to find an optimal trade-off between the privacy protection of local models and the inference performance of the global model in LDP-FL. To achieve this, researchers have designed various adaptive strategies to dynamically adjust the amount of noise injection, leveraging model information such as $\mathcal{L}_2$ norm~\cite{wei2023securing}, clip bound~\cite{fu2022adap}, sensitivity~\cite{xue2023differentially}, and propagation errors~\cite{zhang2021privacy}. This information is determined by comparing the current training round with historical round(s). However, existing approaches estimate privacy guarantees for the entire local model during each iteration, which results in unnecessary noise addition to layers that contain minimal or no private information but are crucial for prediction accuracy. Consequently, the performance of the global model degrades.

To address the aforementioned limitations, we propose \tool, a novel privacy protection approach in FL that leverages layer-wise noise injection in local models. A key insight in \tool is that neural network layers impose varying degrees of impact (\ie importance) on the global model~\cite{jiang2022model, jiang2022fedmp, vogels2019powersgd}. Besides, these layers, providing granular details of local models, can be manipulated independently~\cite{ma2022layer, lee2023layer}. Unlike existing works that solely support noise injection across the entire local model, \tool can selectively identify neural network layers based on their importance and inject varying levels of noise into these layers to protect privacy. 

However, fulfilling the above insights presents two challenges. First, how to inject noise into selective neural network layers to protect privacy while maintaining model accuracy is an open problem. An ideal solution is to inject more noise in layers that contain significant private information but have a relatively small impact on the model's accuracy. However, this is a challenging task pertaining to model interpretability, especially in complex neural network structures. Second, each participant lacks prior knowledge about the optimization direction of the global model. It is difficult to determine the noise injection amount for different layers in a local model.

To tackle the first challenge, \tool leverages a Layer Selection module to determine the importance of each local neural network layer on the global model. Specifically, this module utilizes the weight values of each layer as an indicator of its influence on the inference accuracy of the global model. \mapping{Comment 3.1 in R1}We adopt this strategy because adversaries rely on the predictive capabilities of the global model to launch attacks by training shadow models (e.g., binary classifiers used in membership inference attacks~\cite{shokri2017membership, hu2022membership} or the GANs used in model inversion attacks~\cite{hitaj2017deep,wu2024fedinverse}). If layers within a local model harbor private information yet exert little or no impact on the global model's accuracy, adversaries would face difficulties in training a shadow model that facilitates their attack. To handle the second challenge, \tool employs a Privacy Estimation module that adopts a carefully crafted Kullback-Leibler (KL) divergence-based approach to estimate the amount of private information contained in each layer of the local model. This amount serves as a crucial indicator to quantify the privacy levels of critical layers, guiding the subsequent layer-wise noise injection. We demonstrate that by dynamically injecting noise into selected neural network layers, \tool achieves an optimal balance between privacy protection and model accuracy.

In addition, we provide a thorough theoretical analysis of the privacy and convergence properties of \tool. Specifically, we derive bounds on the parameters that ensure differential privacy guarantees of \tool within Gaussian mechanisms. Under smoothness assumptions, we establish upper bounds on the model’s convergence. Our theoretical analysis reveals that, under specific conditions, \tool approaches the theoretical optimal upper bound as the iteration number increases, thereby providing a rigorous guarantee of its asymptotic optimality.

We conducted a comprehensive evaluation on the efficacy of \tool utilizing the ResNet-18~\cite{he2016deep} and CNN architecture on two prominent datasets: CIFAR-10~\cite{krizhevsky2009learning} and CIFAR-100~\cite{krizhevsky2009learning}. We rigorously evaluated \tool in terms of accuracy, noise injection scale, and resource utilization under various scenarios. Experimental results demonstrate that, \revise{when compared to the traditional Gaussian DP mechanism (Full DP)~\cite{9069945}, Time-Varying DP~\cite{yuan2023amplitude}, and Sensitive DP~\cite{xue2023differentially}, \tool reduces the average noise injection amount by 69.32\%, 51.35\%, and 1.2\%, respectively, while the model accuracy increases by 236.32\%, 144.93\%, and 102.42\%, respectively. Meanwhile, under the same privacy budget, \tool outperforms two state-of-the-art (SOTA) solutions, Dynamic Privacy Allocation LDP~\cite{zhang2024dynamic} and AdapLDP~\cite{10851366}, by an average of 25.18\% and 6.1\%, respectively, in terms of accuracy. Additionally, \tool injects an average of 40.54\% and 68.31\% less noise than Dynamic Privacy Allocation LDP and AdapLDP, respectively. \mapping{Comment 1.4 in R2}By averaging these results, it can be concluded that \tool achieves an average accuracy improvement of 102.99\% across all scenarios, as well as a reduction in average noise injection by 46.14\%. Overall, \tool achieves a superior balance between privacy protection and model accuracy with a feasible time cost. To verify the privacy protection capability of \tool, we further validated its resistance against image reconstruction attacks on FEMNIST~\cite{caldas2018leaf} and CIFAR-10~\cite{krizhevsky2009learning} datasets. The results show that attackers can hardly infer the victim's private data after utilizing \tool's strategy.}

We summarize our contributions as follows:
\begin{itemize}
	\item \revise{\textbf{Layer-wise Privacy Protection Framework.} We propose \tool, the first federated learning framework that provides differential privacy guarantees through adaptive layer-wise noise injection. By selectively perturbing critical neural network layers, \tool achieves finer-grained privacy-utility tradeoffs compared to model-level DP approaches.}

	\item \revise{\textbf{Dynamic Privacy Estimation.} We design a novel KL-divergence-based module to quantify layer-specific privacy risks, enabling adaptive noise scaling proportional to the sensitive information contained in each layer. This addresses the limitations of coarse-grained with static/dynamic noise injection in prior work.}
	
	\item \revise{\textbf{Theoretical Guarantees.} We formally prove that \tool satisfies $(\epsilon, \delta)$-DP under Gaussian mechanisms (Theorem~\ref{thm:mydp}). Furthermore, we establish convergence guarantees under non-IID data distributions (Theorem~\ref{Thm:conv}), demonstrating asymptotic optimality.}
	
	\item \revise{\textbf{Comprehensive Evaluation.} We evaluate \tool on multiple datasets (CIFAR-10/100) and models (CNN, ResNet-18), showing:	(1) \textit{Accuracy:} 102.99\% average improvement over SOTA DP-FL methods. (2) \textit{Efficiency:} 46.14\% reduction in noise injection volume. (3) \textit{Robustness:} Effective defense against privacy reconstruction attacks. (4) \textit{Scalability:} Maintains effectiveness under extreme non-IID settings.}
\end{itemize}

\revise{\textbf{Open Science.} We release the source code of this project at \code.}

\section{Related Works}
\label{sec:2}

Utilizing DP for privacy protection in FL encompasses two primary research directions: Central Differential Privacy Federated Learning (CDP-FL) and Local Differential Privacy Federated Learning (LDP-FL). In this section, we will provide a comprehensive overview of the related work on both directions, including their distinct characteristics and design considerations.

\subsection{Central Differential Privacy Federated Learning.}
\label{sec_rw_cdpfl}
There are several works~\cite{geyer2017differentially, mcmahan2017learning,hu2020personalized} that apply DP techniques by injecting noise on the server side. In such FL settings, the server injects noise into the model after aggregating new local models from all clients and then sends out the updated model to all clients. CDP-FL prevents participants from inferring each other's information by disrupting the model distribution. In addition, CDP-FL poses minimum interference on the global model, thereby preserving its prediction performance. However, CDP-FL does not prevent the server from inferring private information from local models~\cite{xue2023differentially} since the server can directly access client model weights or gradient parameters without any obfuscation. In contrast, our work primarily employs a local noise injection technique, which provides stronger privacy protection capabilities and ensures that local models cannot be analyzed by the server.

\subsection{Local Differential Privacy Federated Learning.}
\label{sec_rw_ldpfl}
\mapping{Comment 1.3 in R1}Compared to CDP-FL, LDP-FL offers a safer protection mechanism and has garnered more extensive research attention~\cite{kim2021federated, yang2023local, cai2024plfa}, spanning fields such as wireless communication~\cite{seif2020wireless} and medical analysis~\cite{xie2024privacy}. Several studies~\cite{wang2023ppefl,hu2023federated} have adopted a ``first-compression-then-perturbation'' approach, which reduces communication overhead while decreasing the dimensionality of noise injection. 
Chen \etal~\cite{chen2024clfldp} further consider privacy allocation among different parties within compression. 
Yang \etal~\cite{yang2023dynamic} focus on the personalized federated learning framework and inject noise only on shared parameters by utilizing Fisher information.
However, all these approaches inject noise with a fixed distribution into their models. Over multiple iterations, excessive noise may be introduced, resulting in an increased signal-to-noise ratio within the model and subsequently impeding its performance.

\mapping{Comment 1.3 in R1}To address this issue, researchers have explored methods for dynamically controlling noise injection quantities in each round to achieve a better privacy-utility tradeoff~\cite{wei2023securing,yuan2023amplitude,zhang2022understanding, zhang2024dynamic}. Some works~\cite{xue2023differentially,han2021accurate} adjust noise scales by gradient clipping boundaries or historical model information. Fu \etal~\cite{fu2022adap} propose Adap DP-FL, which injects noise into local gradients with adaptive scale adjustment, through clipping-bound updates based on previous bounds and model evolution. Lin \etal~\cite{lin2023heterogeneous} introduce HDP-FL, which leverages contracts to incentivize each client to obtain the optimal amount of private data and more accurately inject dynamic noise. All these works adopt a coarse-grained approach, treating local neural network models as a single entity and injecting noise at the model level. However, this can lead to substantial performance degradation of the global model. In contrast, we propose injecting noise with finer granularity, which better balances model accuracy and privacy protection.

\section{Background} 
\label{sec:3}
\subsection{Federated Learning}

FL systems typically consist of an aggregation server and multiple clients. 
Let $S$ denote the aggregation server and $\mathcal{C} = \{\mathcal{C}_1, \mathcal{C}_2,...,\mathcal{C}_N\}$ denote $N$ clients, each of which has a corresponding dataset $\mathcal{D}_i$, where $i \in \{1, 2, ..., N\}$. Each data point $k \in \mathcal{D}_i$ is denoted by $(x_k, y_k)$. The server has a global model with weights denoted as $w_g$ and a collection $\mathcal{S}$ that stores all the data the clients upload.
Furthermore, we use $w_i^t$ to denote the local model parameters and $F(w_i^t)$ to represent the empirical risk function of the model in $C_i$ at the $t^{th}$ global round. Therefore, $C_i$ can use a prediction loss function $l$ to train its local model as: 
\begin{equation}\label{eq:loss-function}
    \min_{w_i} \ F(w_i^t) \triangleq \frac{1}{|\mathcal{D}_i|} \sum_{k \in \mathcal{D}_i} l(k;w_i^t) = \frac{1}{|\mathcal{D}_i|} \sum_{k \in \mathcal{D}_i} l(y_k, w_i^t(x_k)).
\end{equation}

The above loss function is typically optimized using the local stochastic gradient descent (SGD) algorithm as:
\begin{equation}\label{eq:sgd-function}
    w_{i, e} = w_{i, e-1} - \eta \nabla F(w_{i, e-1}),
\end{equation}
where $e$ denotes the local training epoch and $\eta$ is the learning rate. Note that we follow the same format as SCAFFOLD~\cite{karimireddy2020scaffold} to represent the local epoch (\eg $e$ in Equation~\ref{eq:sgd-function}) and global round (\eg $t$ in Equation~\ref{eq:loss-function}). 

In synchronous federated learning (SFL), the server employs a randomized selection process to determine a set of active clients, denoted by $\mathcal{AC} \subseteq \mathcal{C}$. Subsequently, all the activated clients upload their local models after completing local training. While the server receives shared data from all clients in $\mathcal{AC}$, it will update the global model~\cite{mcmahan2017communication} by:
\begin{equation}
    w_g^{t} = \frac{1}{D} \sum_{w_i \in \mathcal{S}} |\mathcal{D}_i| w_i^t, \quad D \triangleq \sum_{w_i^t \in \mathcal{S}} |\mathcal{D}_i|.
\label{equ:fedavg_update}
\end{equation}

\subsection{Differential Privacy}
DP quantifies the degree of divergence between two neighboring data points within a dataset, indicating the adversary's capability to discern the data points based on the disclosed information. Based on this concept, we can establish a mathematical benchmark for assessing the privacy of a dataset.

\begin{definition}[$(\epsilon, \delta)$-DP~\cite{abadi2016deep}]
     Let $f: \mathcal{D} \longrightarrow \mathcal{R}$ be an arbitrary function. If for all of the adjacent datasets\footnote{Two datasets $\mathcal{D}$ and $\mathcal{D}'$ are considered adjacent if they differ by only one sample. Conventionally, adjacent datasets $\mathcal{D}'$ can be derived by either deleting or replicating a single instance from the original dataset $\mathcal{D}$. For a dataset $\mathcal{D}$ consisting of $n$ instances, there exist precisely $2n$ distinct adjacent datasets.} $\mathcal{D}$, $\mathcal{D}'$ and any randomized output $s \subseteq \mathcal{R}$, the function $f$ satisfies: 

    \begin{equation}
        Pr[f(\mathcal{D}) \in s] \leq e^{\epsilon}Pr[f(\mathcal{D}') \in s] + \delta,
        \label{equ:epsilon_delta_dp}
    \end{equation}
    where $\delta$ is the failure probability. 

    Then we say that $f$ is $(\epsilon, \delta)$-differentially private. 
    \label{def:ed_dp}
\end{definition}

$(\epsilon, \delta)$-DP ensures that for all adjacent $\mathcal{D}, \mathcal{D}'$, the privacy loss will be bounded by $\epsilon$ with probability at least $1-\delta$. If $\delta = 0$, we say that $f$ is $\epsilon$-differentially private. 

The essence of the noise injection mechanism lies in incorporating noise sourced from a predefined distribution, thereby aligning the data distribution with DP specifications. This strategy aids in obscuring discrepancies between neighboring datasets and fortifying the processed dataset against potential privacy breaches. \mapping{Comment 3.4 in R1}A commonly adopted method involves injecting Gaussian noise, which provides a more relaxed form of DP protection, as proven by Theorem~\ref{thm:2}~\cite{dwork2014algorithmic}.

\begin{theorem}
\label{thm:2}
    Let $\epsilon \in (0, 1)$ and $\delta$ be arbitrary. For $c^2 \geq 2ln(\frac{1.25}{\delta})$, the Gaussian Mechanism $n \sim \mathcal{N}(0,\sigma^2)$ with parameter $\sigma \geq \frac{c \Delta f}{\epsilon}$ is $(\epsilon, \delta)-$differentially private~\cite{dwork2014algorithmic}. The sensitivity $\Delta f$ is defined as
\begin{equation}
    \Delta f = \mathop{\max}\limits_{x,x'}||f(x) - f(x')||,
\end{equation}
where $x, x'$ are the arbitrary adjacent inputs.
\end{theorem}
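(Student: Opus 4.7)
The plan is to follow the classical Dwork--Roth argument, reducing the problem to a one-dimensional Gaussian by symmetry and then controlling the privacy loss random variable via a Gaussian tail bound. First, I would observe that because $\Delta f$ is defined in $\ell_2$ norm and the mechanism uses an isotropic (or here, scalar) Gaussian, the privacy analysis for adjacent inputs $x, x'$ depends only on $\|f(x) - f(x')\| \leq \Delta f$. By rotational invariance of the spherical Gaussian, one can align coordinates so that $f(x) - f(x')$ lies along a single axis, so it suffices to analyze two one-dimensional Gaussians $\mathcal{N}(0, \sigma^2)$ and $\mathcal{N}(\mu, \sigma^2)$ with $|\mu| \leq \Delta f$.

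Next, I would introduce the privacy loss random variable. For an output $t$, after centering one of the densities at the origin,
\begin{equation}
L(t) \;=\; \ln\frac{\Pr[M(x)=t]}{\Pr[M(x')=t]} \;=\; \frac{2t\mu - \mu^2}{2\sigma^2},
\end{equation}
which is linear in $t$. I would then identify the ``bad set''
\begin{equation}
B \;=\; \bigl\{\, t : |L(t)| > \epsilon \,\bigr\},
\end{equation}
which, by the linearity above, is simply a one-sided tail of the form $\{t > \tau\}$ for some threshold $\tau$ of order $\sigma \epsilon / \Delta f - \Delta f/(2\sigma)$. Under the hypothesis $\sigma \geq c\,\Delta f/\epsilon$, this threshold is at least $c - \tfrac{1}{2c}$ standard deviations from the mean.

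The remaining step is to split, for any measurable $S$,
\begin{equation}
\Pr[M(x)\in S] \;\leq\; \Pr[M(x)\in S\cap B^c] + \Pr[M(x)\in B]
\;\leq\; e^{\epsilon}\Pr[M(x')\in S] + \Pr[M(x)\in B],
\end{equation}
where the first inequality uses $|L|\leq\epsilon$ on $B^c$. It thus suffices to bound $\Pr[M(x)\in B] \leq \delta$. Applying the standard tail inequality $\Pr[Z > t] \leq \tfrac{1}{t\sqrt{2\pi}}e^{-t^2/2}$ for a standard normal $Z$ at the shifted threshold above, the bound reduces to verifying an algebraic inequality that is satisfied precisely when $c^{2}\geq 2\ln(1.25/\delta)$ and $\epsilon\in(0,1)$.

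The main obstacle is getting the constant $1.25$ correct: the raw tail bound yields a slightly larger constant, and one must carefully track the cross term $-\mu^{2}/(2\sigma^{2})$ in the privacy loss and use the range restriction $\epsilon<1$ to absorb lower-order terms. Once that algebraic manipulation is carried out, the two pieces combine to give the desired $(\epsilon,\delta)$-DP guarantee.
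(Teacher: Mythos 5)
The paper does not prove this statement at all: Theorem~\ref{thm:2} is quoted as background directly from Dwork and Roth~\cite{dwork2014algorithmic}, so there is no in-paper proof to compare against. Your sketch is a faithful reproduction of the canonical Dwork--Roth argument (reduction to one dimension by rotational invariance, linearity of the privacy loss, the split $\Pr[M(x)\in S]\le e^{\epsilon}\Pr[M(x')\in S]+\Pr[M(x)\in B]$, and the Gaussian tail bound), and the outline is correct. One small point to tighten: the set $\{t:|L(t)|>\epsilon\}$ is a union of two tails, not one; for the splitting inequality you only need the one-sided set $\{t:L(t)>\epsilon\}$, whereas Dwork--Roth control the two-sided event by allotting $\delta/2$ to each tail --- this bookkeeping is precisely where the constant $1.25$ comes from, so you should commit to one convention before doing the final algebra.
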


\section{Threat Model}
\label{sec:threat}

We consider a general FL system consisting of a central aggregation server and a collection of distributed clients, as shown in Figure~\ref{fig_1}. Below, we discuss the security assumptions and threats for both the server and the individual clients.

\begin{figure}[!tp]
\centering
\includegraphics[width=3.5in]{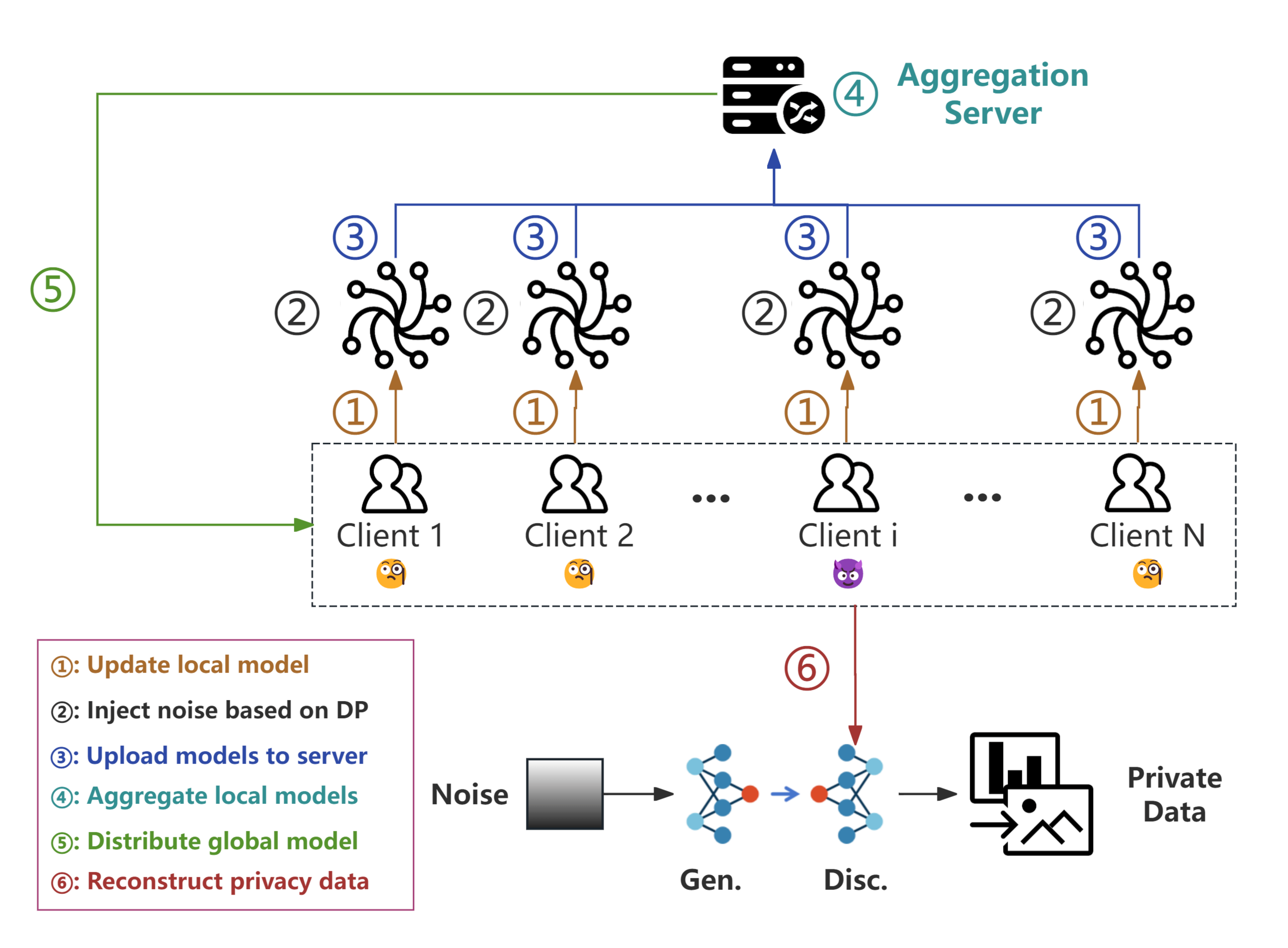}
\caption{An FL model with an honest-but-curious client trying to infer private information from the global model. ``Gen.'' and ``Disc.'' denote Generator and Discriminator, respectively.}
\label{fig_1}
\end{figure}

\begin{itemize}
    \item \textbf{Aggregation Server.} We assume that the central aggregation server is fully trusted, adhering to its designated role in the FL system. Specifically, we assume that the server faithfully performs the tasks of aggregating model parameters (Step \ding{195} in Figure~\ref{fig_1}) and distributing correct and complete models without attempting to compromise user privacy or maliciously altering model parameters (Step \ding{196} in Figure~\ref{fig_1}). Addressing malicious server scenarios is out of the scope of this paper. 
    \item \textbf{Non-malicious Clients.} We consider most of the clients participating in the FL system to be non-malicious, such as \texttt{Client 1} and \texttt{Client N} in Figure~\ref{fig_1}. These clients use local datasets to train models (Step \ding{192} in Figure~\ref{fig_1}) and then upload model parameters to the server (Step \ding{194} in Figure~\ref{fig_1}), awaiting subsequent global updates.
    \item \textbf{Honest-but-Curious (HBC) Clients.} We assume that there exist HBC clients (\eg \texttt{Client i} in Figure~\ref{fig_1}) in the FL system who faithfully execute their designated tasks and upload accurate model parameters. However, these clients exhibit curiosity and seek to reconstruct the private data of other participants. In this paper, we follow related work~\cite{hitaj2017deep} and assume that HBC clients employ a Generative Adversarial Network (GAN) to reconstruct the sensitive information of fellow clients (Step \ding{197} in Figure~\ref{fig_1}). The discriminator architecture of this GAN aligns entirely with the global model, while the generator architecture adopts a deconvolutional neural network. 
\end{itemize}

\section{Methodology}
\label{sec:4}
In this section, we first present an overview of \tool, followed by a more detailed description of its design modules. 

\begin{algorithm}[!t]
\caption{\mapping{Comment 3.2 in R1}The workflow of \tool on the server side.}
\footnotesize
    \begin{algorithmic}[1]
        \REQUIRE Global model $w_{g}^t$; Global training round $T$; Server storage collection $\mathcal{S}$.
	\ENSURE A protected model $\tilde{w}_{i}^{t}$.
            \STATE Initialize the global model $w_g^0$ and global round $t \leftarrow 0$;
            \STATE Distribute the global model $w_g^0$;
            \WHILE{Global round $t \leq T$} 
                \STATE Rondamly activate a subset of $\mathcal{C}$, denote as $\mathcal{AC}$;
                \STATE Wait for all the clients in $\mathcal{AC}$ upload their protected model $w_i^t$ by Algorithm 2;
                \STATE  \NOALGCOMMENT{\textit{Aggregate the global model}} \newline \hspace*{1em} $w_g^t = \sum_{w_i^t \in \mathcal{S}} \frac{|\mathcal{D}_i|}{D} \tilde{w}_i^t$;  
                \STATE Distribute new model $w_{g}^{t}$ to all clients.
                \STATE $t \leftarrow t+1$;   
            \ENDWHILE
    \end{algorithmic}
\label{alg:alg1}
\end{algorithm}

\begin{algorithm}[!t]
\caption{\mapping{Comment 3.2 in R1}The workflow of \tool on the client side at global training round $t$.}
\footnotesize
    \begin{algorithmic}[1]
        \REQUIRE New global model $w_{g}^{t-1}$; Local dataset $\mathcal{D}_i$; Local training epoch $E$; Learning rate $\eta$; Privacy parameters $\epsilon, \delta, c_i$; Privacy estimation bound $B$; Gradients clip bound $G_c$; Layer selection parameters $R$.
	\ENSURE A protected model $\tilde{w}_{i}^{t}$.
            \STATE  \NOALGCOMMENT{\textit{Update local model by new global model}} \newline \hspace*{1em}
            $w_{i,0}^t \leftarrow w_g^{t-1}$;
            \STATE  Initialize $e = 0$;
            \WHILE{local epoch $e \leq E$} 
                \STATE Compute the loss $\nabla F(w_{i,e-1}^t)$ by local dataset $\mathcal{D}_i$;
                \STATE  \NOALGCOMMENT{\textit{Clip the local gradients}} \newline \hspace*{1em} $\nabla F(w_{i,e-1}^t) \leftarrow \frac{\nabla F(w_{i,e-1}^t; \mathcal{D}_i)}{ \max(1, \frac{||\nabla F(w_{i,e-1}^t; \mathcal{D}_i)||}{G_c})}$;  
                \STATE \NOALGCOMMENT{\textit{Update local models}} \newline \hspace*{1em}$w_{i,e}^t \leftarrow w_{i,e-1}^t - \eta \nabla F(w_{i,e-1}^t)$;  
                \STATE $e \leftarrow e+1$;   
            \ENDWHILE
            \STATE  \NOALGCOMMENT{\textit{Finish local training}} \newline \hspace*{1em}
            $w_{i}^t \leftarrow w_{i,E}^{t}$;
            \STATE \NOALGCOMMENT{\textit{Estimate the function sensitivity}} \newline  \hspace*{1em}$\Delta f_i = 2\eta E G_c$; 
            \FORALL{layers $j$ in the network $w_i^t$}
                    \IF{Layer selection: $||w_{i,j}^t|| \geq R$} 
                        \STATE \NOALGCOMMENT{\textit{Privacy Estimation by Algorithm~\ref{alg:PE}}} \newline \hspace*{1em}$P_{i,j} \leftarrow \text{\textit{PRIVACY\_ESTIMATION}}(w_{i,j}^t, w_{g}^t)$; 
                        \STATE \NOALGCOMMENT{\textit{Adaptive Noise Injection by Algorithm~\ref{alg:NI}}} \newline \hspace*{1em}$\tilde{w}_{i,j}^{t} \leftarrow \text{\textit{NOISE\_INJECTION}}(w_{i,j}^t,\epsilon,\Delta f_{i,j},c_i,P_{i,j})$; 
                    \ENDIF
            \ENDFOR
    \STATE Upload local model $\tilde{w}_{i}^{t}$.
    \end{algorithmic}
\label{alg:alg2}
\end{algorithm}

\subsection{System Overview}

\mapping{Comment 3.2 in R1}The workflow of \tool on both the server side and the client side is illustrated in Algorithms~\ref{alg:alg1} and~\ref{alg:alg2}, respectively. Specifically, in Algorithm~\ref{alg:alg1}, the server first initializes the federated settings and distributes the initial model to all clients (Lines 1-2). Then, during each global training round, the server randomly selects a subset of activated clients to train the model using their local datasets and awaits their updates (Lines 4-5). Finally, the server aggregates the updates to form a new global model, distributes this new model to all clients, and proceeds to the next global round (Lines 6-8). In Algorithm~\ref{alg:alg2}, upon receiving the initial model from the server (Line 1), each activated local client first executes local training (Lines 2-9). Then, the client estimates the sensitivity of the current epoch (Line 10), and iterates through each neural network layer to determine whether it contains essential information for model updates (Lines 11-12). If the norm value of the weights in a certain neural network layer exceeds a predefined threshold $R$, the client estimates the privacy level $P_{i,j}$ for that layer, and then enforces an upper bound on $P_{i,j}$ to guarantee fundamental privacy protection (Line 13). Next, based on the derived privacy levels, the client determines the noise range and samples noise to inject into each layer (Line 14). Finally, after processing all neural network layers, the client uploads the noisy model to the server (Line 19).

\mapping{Comment 2.3 in R1 \& Comment 3.5 in R1}To better balance privacy protection and model accuracy, \tool ignores those layers with smaller weights when injecting noise (Line 12 in Algorithm~\ref{alg:alg2}). Since pruning away small-weighted layers will not significantly impair the model's predictive ability~\cite{jiang2022model, vogels2019powersgd}, these smaller weights indicate weaker predictive power. Therefore, \tool can mitigate the adverse effects of noise on the model's utility while ensuring that privacy is not unduly compromised. For other layers, \tool first assesses their private content using the KL divergence (Lines 13-14 in Algorithm~\ref{alg:alg2}). For layers containing more private information, \tool will prioritize privacy protection by sacrificing some model utility; otherwise, \tool will minimize noise injection to enhance model utility (Lines 15-16 in Algorithm~\ref{alg:alg2}).

\subsection{System Design}
\mapping{Comment 3.3 in R1}\tool consists of three modules: layer selection, privacy estimation, and noise injection, as illustrated in Figure~\ref{fig_system_model}. We explain each of these modules in more detail below.

\begin{figure}[!tp]
\centering
\includegraphics[width=3.5in]{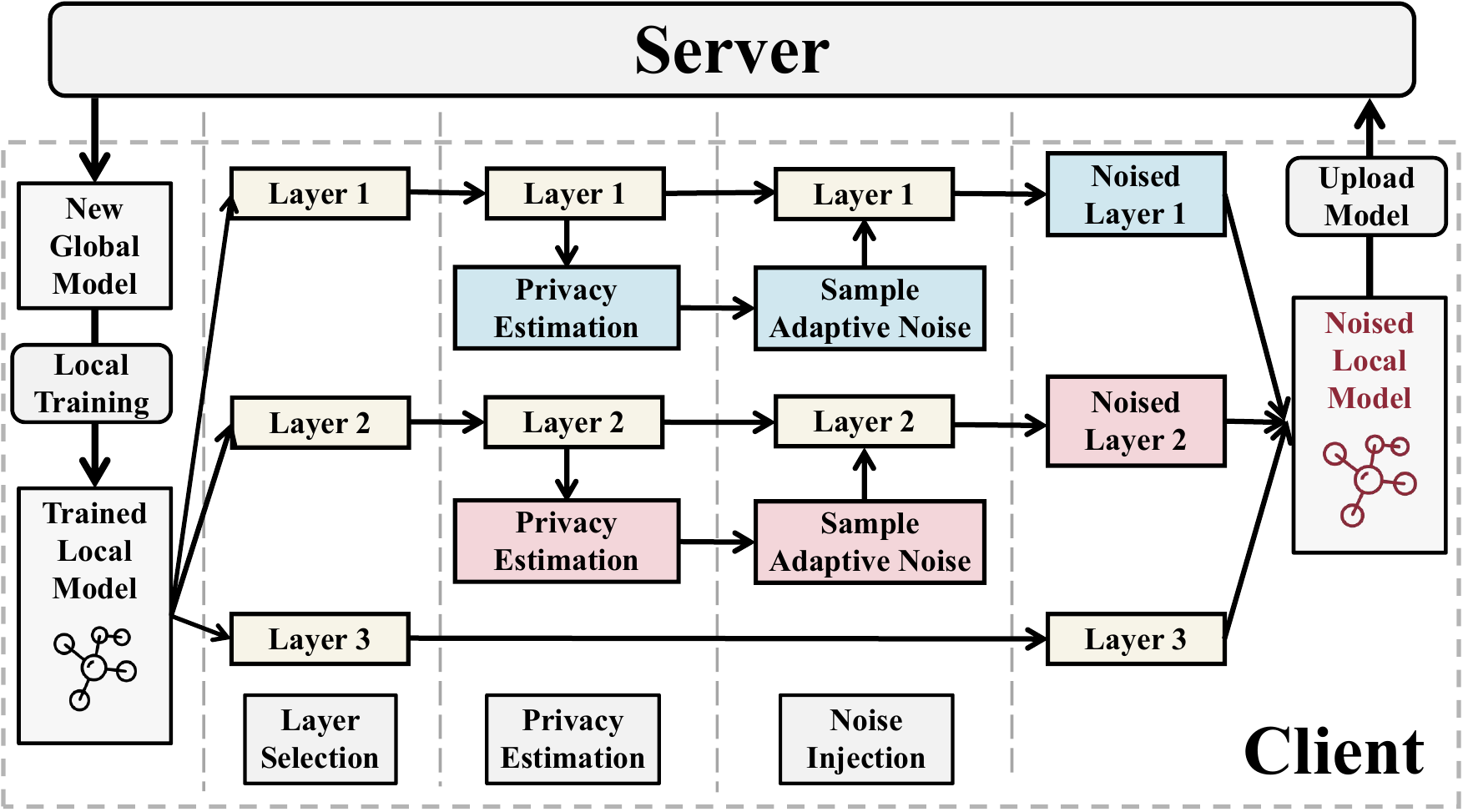}
\caption{\mapping{Comment 3.3 in R1}A visual representation of each module of \tool. The blue and red squares represent different levels of privacy information associated with different layers, resulting in different amounts of noise injection.}
\label{fig_system_model}
\end{figure}

\subsubsection{Layer Selection}
\label{module1}
In this module, \tool identifies the neural network layers that have a more substantial impact on the predictive performance of the global model. Specifically, upon receiving the latest global model from the server, the client $C_i$ first updates its current local model $w_{i,0}^t$ by replacing it with the previous round's global model $w_{g}^{t-1}$.  Then, the client employs its local optimizer to update the model for $E$ epochs iteratively and gets a new model $w_{i, E}^t$. Inspired by related work~\cite{zhang2022understanding}, we clip the model's gradients by $G_c$ based on Assumption~\ref{Ass:2}, which ensures the convergence stability of the model \eat{on heterogeneous data distributions }during local training.

\begin{assumption}[Client-level Bounded Gradients~\cite{zhang2012communication,stich2018local}]  
	\label{Ass:2}
	For $\forall \  C_i \in \mathcal{C}$, we assume that there exists a constant $G_c \geq 0$ such that its gradient $\nabla F(w_i^t)$ satisfies: 
	\begin{equation}
		||\nabla F(w_i^t)|| \leq G_c.
	\end{equation}
\end{assumption}

After completing local training, participants inspect each layer of the neural network to identify those containing significant information. Researchers have employed ablative experiments~\cite{zeiler2014visualizing, yosinski2014transferable} by removing or replacing specific layers to observe changes in model performance. However, utilizing this strategy to select layers is practically infeasible. This is because conducting control experiments for every neural network layer during the FL process could incur enormous time and resource costs.  To address this challenge, \tool leverages a key insight commonly used for model pruning and compression: \textit{layers with larger weights contribute more significantly to the model's prediction}~\cite{jiang2022model, vogels2019powersgd}. Therefore, by calculating the $\mathcal{L}_2$ norm of each layer and considering those exceeding a threshold $R$ as crucial for maintaining model predictability, \tool mitigates the impact of noise on the global model. Besides having a significant impact on the model's prediction, these crucial layers also carry important private information. Intuitively, for layers with smaller weights, it is difficult for adversaries to extract useful information from these layers to launch attacks, such as model inversion~\cite{fredrikson2015model, hitaj2017deep} and membership inference~\cite{shokri2017membership, nasr2019comprehensive, liang2023egia}. Hence, \tool relies on crucial layers for adaptive noise injection, which ensures privacy protection.

\subsubsection{\mapping{Comment 1.5 in R1}Privacy Estimation}
\label{module2}
Traditional privacy attacks~\cite{fredrikson2015model, hitaj2017deep, shokri2017membership} rely on an aggregated global model to infer private information. \revise{Existing DP-FL approaches~\cite{han2021accurate,fu2022adap,yuan2023amplitude} typically inject uniform noise across all model layers, ignoring the varying risks of different layers to privacy leakage. This one-size-fits-all strategy leads to either excessive utility loss (when over-protecting insensitive layers) or insufficient privacy guarantees (when under-protecting critical layers).}

\revise{To address this problem, we employ KL divergence for layer-wise privacy estimation based on three fundamental principles:}

\begin{enumerate}
	\item \revise{\textbf{DP as Distribution Alignment}: Differential privacy fundamentally requires neighboring datasets to produce statistically similar outputs~\cite{dwork2014algorithmic}. The KL divergence directly measures this similarity through the relative entropy between distributions:
	\begin{equation}
		KL(p||q) = \mathbb{E}_p\left[\log\frac{p(x)}{q(x)}\right]
		\label{equ:KL}
	\end{equation}}

	\item \revise{\textbf{Information-Theoretic Interpretation}: The KL value $KL(w_{i,j}^t||w_{g,j}^t)$ quantifies the minimum additional information (in nats) needed to encode layer $j$'s weights using the global distribution rather than the local one. This aligns with privacy leakage metrics in~\cite{sreekumar2023limit}.}

	\item \revise{\textbf{Attack Surface Correlation}: As demonstrated in~\cite{zang2024detection}, reconstruction attack success probability grows exponentially with decreasing KL divergence between victim and attacker models.}
\end{enumerate}

\revise{Below, we consider two scenarios for intuitive interpretation:
\begin{itemize}
	\item \textit{Small KL divergence}: Layer weights closely match the global distribution → Higher privacy risk (requires more noise)
	\item \textit{Large KL divergence}: Significant deviation from global distribution → Lower privacy risk (needs less protection)
\end{itemize}}

\begin{figure}[t]
	\centering
	\includegraphics[width=1\linewidth]{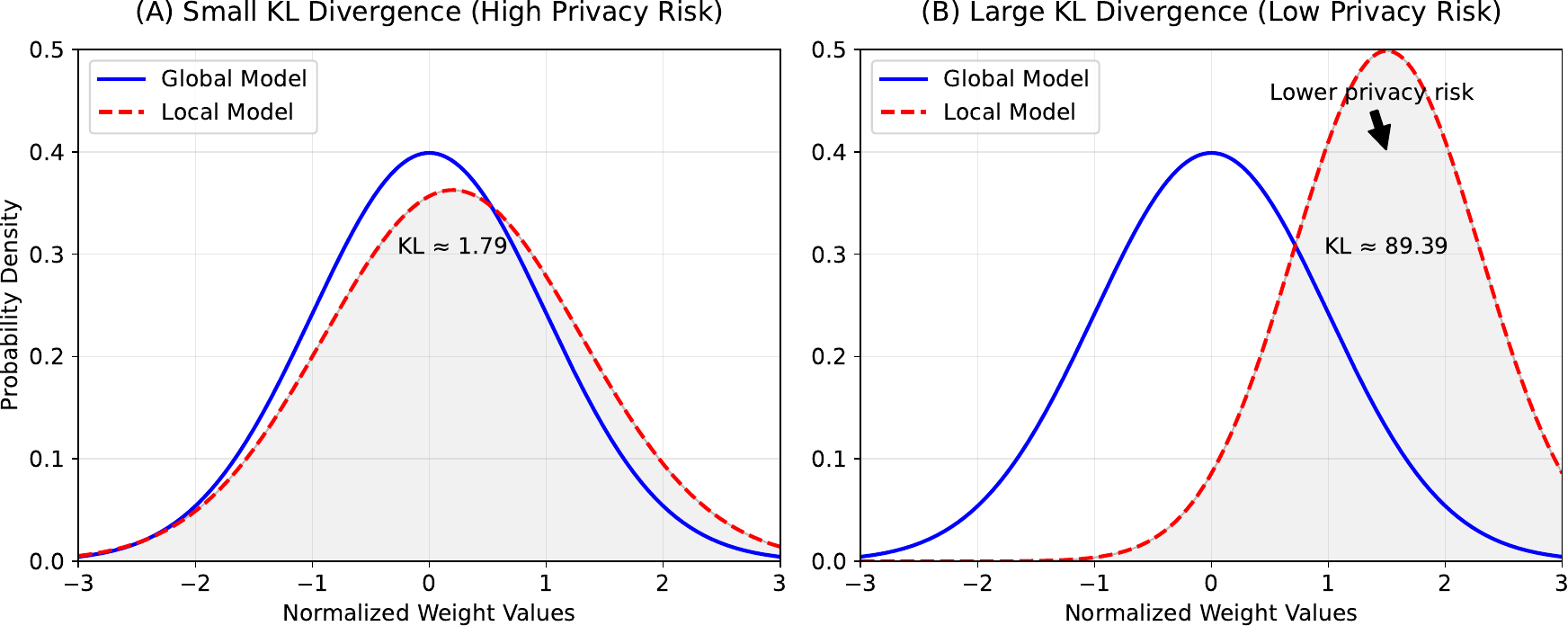}
	\caption{\revise{Illustration of KL divergence's relationship with privacy risk. (A) When $KL \approx 0$, the global model has a higher privacy risk because it closely resembles the local model. (B) When $KL \gg 0$, the local layer reveals little about the global mode due to the significant deviation.}}
	\label{fig:kl_interpretation}
\end{figure}

\begin{algorithm}[!t]
\caption{\mapping{Comment 1.1 in R2}The process of privacy estimation.}
\footnotesize
    \begin{algorithmic}[1]
        \REQUIRE The parameter matrix of the $j$-th layer within the local model $w_i^t$, denoted as $w_{i,j}^t$; Global model $w_{g}^t$.
	\ENSURE The quantified privacy of the $j$-th layer $P_{i,j}$.
            \STATE Extract the parameter matrix of the $j$-th layer within the global model, denoted as $w_{g,j}^t$;
            \STATE Flatten the parameter matrices $w_{i,j}^t$ and $w_{g,j}^t$;
            \newline \NOBLGCOMMENT{\ie flatten to 1-D vectors, such as from $\mathbb{R}^{m \times n}$ to $\mathbb{R}^{mn}$.}
            \STATE Normalize the layers using \textit{softmax()};
            \newline \NOBLGCOMMENT{\eg $[1,2,3] \to [0.09,0.24,0.67]$}
            \STATE Calculate $P_{i,j}$ based on the KL divergence using Equation~\ref{equ:privacy-estimation};\newline \NOBLGCOMMENT{\eg using function \textit{kl\_div()} in the package \textit{torch.nn.functional}.}
            \STATE Clip $P_{i,j}$ by the boundary $B$. 
            \newline \hspace*{1em} $P_{i,j} \leftarrow \min(P_{i,j}, B)$.        
    \end{algorithmic}
\label{alg:PE}
\end{algorithm}

\mapping{Comment 1.1 in R2}Algorithm~\ref{alg:PE} shows the detailed procedure for privacy calculation. \tool first extracts the parameter matrix $w_{g,j}^t$ of the $j$-th layer within the global model (Line 1). Then, \tool flattens and normalizes the selected layers (Lines 2 and 3) to ensure the non-negativity of the KL divergence. Finally, \tool computes $P_{i,j}$ using Equation~\ref{equ:privacy-estimation} (Line 4) and clips it to an upper bound $B$. (Line 5) A large $P_{i,j}$ suggests that the weight distribution in the $j$-th layer of the local model exhibits a substantial deviation compared to its counterpart in the global model, thereby better safeguarding privacy. This is because the substantial deviation makes it difficult for HBC participants to recover information about the $j$-th layer of the victim model from the global model. Hence, \tool leverages $P_{i,j}$ to adjust the noise injection amount adaptively. In particular, as $P_{i,j}$ increases, \tool reduces the noise injection amount (\ie $\sigma_{i,j} \propto 1/P_{i,j}$) to the corresponding layers to protect privacy. \tool also controls the boundary of $P_{i,j}$ to prevent situations where the noise injection amount is too small and tends to zero (see Section~\ref{module3}).

\begin{definition}[\revise{Layer Privacy Metric}]
	For layer $j$ in client $i$'s model $w_i^t$, the privacy content $P_{i,j}$ is:
	\begin{equation}
		P_{i,j} = \min(KL(w_{i,j}^t||w_{g,j}^t), B),
		\label{equ:privacy-estimation}
	\end{equation}
	where $B$ is a clipping boundary ensuring minimum noise injection.
\end{definition}

\revise{This approach provides three key advantages: (1) \textbf{Data-Agnostic}: Requires no knowledge of other clients' data distributions. (2) \textbf{Layer-Adaptive}: Captures varying sensitivity across network depths. (3) \textbf{Attack-Resistant}: Directly correlates with reconstruction difficulty (verified in Section~\ref{sec:defense-eval}).}

\subsubsection{\mapping{Comment 1.7 in R1 \& Comment 2.3 in R1 \& Comment 1.5.1 in R2}Noise Injection}
\label{module3}

\revise{The noise injection process critically impacts both privacy protection and model utility in FL systems. Excessive noise degrades model accuracy and training stability, while insufficient noise compromises privacy. To address these challenges, \tool implements an adaptive layer-wise noise injection mechanism with three key innovations: (1) privacy-aware noise scaling, (2) dimensional consistency enforcement, and (3) cross-client compatibility.}

\revise{First, we conduct a layer-wise sensitivity analysis and derive layer-specific sensitivity bounds, taking into account the non-IID data distribution across clients, based on Theorem~\ref{theorem_sensitivity}.}

\begin{theorem}
	For client $\mathcal{C}_i$ with local gradient $\nabla F(w_i^t)$ clipped by $G_c$, the sensitivity of layer $w_{i,j}$ satisfies:
	\begin{equation}
		\Delta f_{i,j} \leq 2\eta E G_c.
	\end{equation}
	where $\eta$ is the learning rate and $E$ is local epochs.
	\label{theorem_sensitivity}
\end{theorem}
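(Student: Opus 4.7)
The plan is to bound $\Delta f_{i,j}$ directly from the local SGD recursion in Equation~\ref{eq:sgd-function}, exploiting (i) the fact that the $j$-th layer parameters are just a projection of the full parameter vector, and (ii) Assumption~\ref{Ass:2} together with the explicit clipping step on Line 5 of Algorithm~\ref{alg:alg2}. First I will unroll the $E$ local epochs: starting from $w_{i,0}^t = w_g^{t-1}$, which is identical for neighboring datasets $\mathcal{D}$ and $\mathcal{D}'$, I can write
\begin{equation}
    w_i^t(\mathcal{D}) = w_g^{t-1} - \eta \sum_{e=0}^{E-1} \nabla F\!\left(w_{i,e}^t(\mathcal{D});\mathcal{D}\right),
\end{equation}
and analogously for $\mathcal{D}'$. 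Restricting to the $j$-th layer (which is a linear projection, so it commutes with the sum and with the subtraction) gives
\begin{equation}
    w_{i,j}^t(\mathcal{D}) - w_{i,j}^t(\mathcal{D}') = -\eta \sum_{e=0}^{E-1}\Bigl[\nabla_{j} F\!\left(w_{i,e}^t(\mathcal{D});\mathcal{D}\right) - \nabla_{j} F\!\left(w_{i,e}^t(\mathcal{D}');\mathcal{D}'\right)\Bigr].
\end{equation}

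Next I apply the triangle inequality to the right-hand side and bound each per-epoch term by the sum of its two layer-wise gradient norms. The key observation is that the $j$-th layer slice of a vector has norm no larger than the full vector, so $\|\nabla_{j} F(\cdot)\| \leq \|\nabla F(\cdot)\|$. Combined with the clipping performed in Algorithm~\ref{alg:alg2} (Line 5), each $\|\nabla F(w_{i,e-1}^t)\|$ is deterministically bounded above by $G_c$, yielding
\begin{equation}
    \|w_{i,j}^t(\mathcal{D}) - w_{i,j}^t(\mathcal{D}')\| \leq \eta \sum_{e=0}^{E-1} 2 G_c = 2\eta E G_c.
\end{equation}
Taking the supremum over adjacent pairs gives $\Delta f_{i,j} \leq 2\eta E G_c$, as desired.

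The delicate point I expect to spend care on is that the two trajectories $\{w_{i,e}^t(\mathcal{D})\}$ and $\{w_{i,e}^t(\mathcal{D}')\}$ diverge after the first update, so one cannot simply cancel terms inside the sum; the per-epoch gradients are evaluated at different iterates \emph{and} on different datasets. The clean way around this is not to try to cancel, but rather to bound each of the two gradients individually using the clipping constant $G_c$, which holds uniformly across epochs, iterates, and datasets. A secondary subtlety is the use of the layer projection: I will explicitly state that for any vector $v$ partitioned into layer blocks, $\|v_j\| \leq \|v\|$, so the clipping bound $G_c$ on the full gradient passes directly to the layer-wise gradient without any dimension-dependent constant. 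Everything else reduces to the triangle inequality and summation over $E$ epochs, so no further assumptions beyond Assumption~\ref{Ass:2} and the explicit clipping in Algorithm~\ref{alg:alg2} are required.
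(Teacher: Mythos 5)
Your proof is correct and follows essentially the same route as the paper's: both arguments exploit the identical initialization $w_{i,0}^t = w_g^{t-1}$ on neighboring datasets, bound each per-epoch gradient individually by $G_c$ via the clipping step rather than attempting any cancellation, and accumulate $2\eta G_c$ over the $E$ local epochs. The only cosmetic differences are that you unroll the sum directly while the paper phrases the same bound as a recursion iterated back to epoch $0$ (where the difference vanishes), and you make explicit the layer-projection inequality $\|v_j\|\le\|v\|$ that the paper's sketch leaves implicit.
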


\revise{This elucidates how to estimate the sensitivity in Theorem~\ref{thm:2}, thereby determining the noise variance by the Gaussian mechanism (proof sketch in Supplemental Material).}

\begin{algorithm}[!t]
	\caption{\mapping{Comment 1.1 \& 1.5.1 in R2}Layer-wise Adaptive Noise Injection}
	\footnotesize
	\begin{algorithmic}[1]
		\REQUIRE Layer parameters $w_{i,j}^t \in \mathbb{R}^{m_1\times...\times m_d}$, $\epsilon$, $\Delta f_{i,j}$, $c_i$, $P_{i,j}$.
		\ENSURE Noisy layer $\tilde{w}_{i,j}^{t}$.
		\STATE Compute $\sigma_{i,j} = \frac{c_i \Delta f_{i,j}}{\epsilon P_{i,j}}$
		\STATE Generate noise matrix $n_{i,j}$ where:
		\begin{equation}
			n_{i,j}^{(k_1,...,k_d)} \sim \mathcal{N}(0, \sigma_{i,j}^2)\ \forall k_l \in {1,...,m_l} \nonumber
		\end{equation}
		\STATE Element-wise addition: $\tilde{w}_{i,j}^{t} = w_{i,j}^t + n_{i,j}$
	\end{algorithmic}
	\label{alg:NI}
\end{algorithm}

\mapping{Comment 1.5 in R2}\revise{Algorithm~\ref{alg:NI} details our dimensional-consistent noise injection process. It includes two key implementation aspects: (1) \textbf{Dimensional Matching}: For different selected layers (\eg convolutional layer with $64\times64\times3\times3$ dimensions in ResNet-18), we generate noise tensors matching the parameter dimensions exactly. (2) \textbf{Cross-client Consistency}: All clients within the same FL select layers from the same model architecture, thus ensuring that all distributed clients generate identically shaped additive noise tensors for the same selected layers.}

\revise{\tool adopts an adaptive noise scaling strategy, where the standard deviation of the noise follows:
\begin{equation}
	\sigma_{i,j} = \frac{c_i \Delta f_{i,j}}{\epsilon P_{i,j}},
	\label{equ:adaptive_dp}
\end{equation}
where $P_{i,j}$ is the clipped KL-divergence privacy estimate. This creates an inverse relationship where:
\begin{itemize}
	\item Low $P_{i,j}$ (similar layers) $\Rightarrow$ Larger noise (stronger privacy)
	\item High $P_{i,j}$ (dissimilar layers) $\Rightarrow$ Smaller noise (preserves useful information)
\end{itemize}}

\mapping{Comment 1.7 in R1 \& Comment 1.6 in R2}\revise{\tool ensure robust performance under non-IID data distribution because of two mechanisms: (1) \textbf{Privacy Thresholding}: A minimum noise level $B$ prevents under-protection. (2) \textbf{Dynamic Privacy Estimation}: Models exhibit variations in layer weights across different data distributions; dynamic privacy estimation mechanisms capture these variations and adaptively regulate noise injection to maintain data privacy and model utility concurrently. As demonstrated in Section~\ref{sec:impact_of_noniid}, \tool achieves an average of 25.21\% accuracy improvement over all baselines in extreme non-IID settings.}

\section{\mapping{Comment 1.4 in R1}Theoretical Analysis}
\label{sec:5}

In this section, we conduct a theoretical analysis of the privacy guarantees offered by \tool. We further conduct the convergence analysis of \tool in scenarios involving restricted gradients and noises, demonstrating that \tool has the capability to converge toward the optimal solution.

\subsection{Privacy Analysis}

Based on Theorem~\ref{thm:2}, we have established that under standard Gaussian mechanisms, $(\epsilon, \delta)$-DP is satisfied when the variance satisfies $\sigma \geq \frac{c\Delta f}{\epsilon}$, where $c^2 > 2\ln{\frac{1.25}{\delta}}$. To provide the theoretical privacy guarantees, we first present the following theorem:
\begin{theorem} Given privacy budget $\epsilon$, failure probability $\delta$ and the KL-privacy estimation bound $B$, if the parameter $c_i$ for the client $\mathcal{C}_i$ satisfies:
\begin{equation}
     c_i \geq 
         \begin{cases}
             & \displaystyle{\frac{(\sqrt{\ln{\frac{2}{\pi\delta^2}}} + \sqrt{\ln{\frac{2}{\pi\delta^2}} + 8\epsilon})B}{4}}, \quad \text{if $0 < \delta \leq \sqrt{\frac{2}{\pi e^4}}$} \\
             & \\
             & \displaystyle{\frac{(1 + \sqrt{1 + 2\epsilon})B}{2}}, \quad \text{if $\sqrt{\frac{2}{\pi e^4}} < \delta < 1$.}
         \end{cases}
         \label{thm:equ:my_dp}
\end{equation}
Then, our proposed system \tool satisfies $(\epsilon, \delta)$-DP if the client $\mathcal{C}_i$ adds Gaussian noise with variance in Equation~\ref{equ:adaptive_dp}.
\label{thm:mydp}
\end{theorem}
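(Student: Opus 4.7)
The plan is to reduce Theorem~\ref{thm:mydp} to the Gaussian-mechanism bound of Theorem~\ref{thm:2} applied to a single worst-case layer, then back-solve for the smallest $c_i$ under which that Gaussian release is $(\epsilon,\delta)$-DP. First, I would observe that, in Algorithm~\ref{alg:alg2}, the only randomness relevant to the DP analysis is the element-wise Gaussian perturbation $n_{i,j}\sim\mathcal{N}(0,\sigma_{i,j}^{2})$ added in Algorithm~\ref{alg:NI}; the local SGD updates, layer selection, and privacy estimation are all deterministic functions of the client's data and can be absorbed into post-processing. By Theorem~\ref{theorem_sensitivity} the per-layer sensitivity is bounded by $\Delta f_{i,j}\le 2\eta E G_c$, so the privacy level of each noisy layer is controlled entirely by $\sigma_{i,j}/\Delta f_{i,j}$.

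Second, I would identify the worst-case noise scale. Because $P_{i,j}=\min(KL(w_{i,j}^{t}\|w_{g,j}^{t}),B)\le B$, the smallest standard deviation that Equation~\ref{equ:adaptive_dp} can assign is $\sigma_{\min}=c_i\Delta f_{i,j}/(\epsilon B)$, attained precisely when the KL estimate saturates the clip $B$. Since every realized $\sigma_{i,j}$ is at least $\sigma_{\min}$, verifying $(\epsilon,\delta)$-DP at $\sigma_{\min}$ automatically covers every other realization of the mechanism. After this reduction, the theorem becomes a purely analytic question: for what $c_i$ is the Gaussian mechanism with $\sigma=c_i\Delta/(\epsilon B)$ guaranteed to be $(\epsilon,\delta)$-DP?

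Third, I would plug $\sigma_{\min}$ into the Gaussian-mechanism tail calculation. Writing the privacy-loss random variable $L\sim\mathcal{N}\bigl(\Delta^{2}/(2\sigma^{2}),\,\Delta^{2}/\sigma^{2}\bigr)$ and setting $u:=c_i/B$, the sufficient condition $\Pr[L>\epsilon]\le\delta$ standardizes to $\Pr\bigl[\mathcal{N}(0,1)>u-\epsilon/(2u)\bigr]\le\delta$. Applying a two-sided Mills-type bound on the standard-normal tail (which is where the $\sqrt{2/\pi}$ prefactor, and hence the constant $2/\pi$ inside the logarithm, enters) rearranges this into $\bigl(u-\epsilon/(2u)\bigr)^{2}\ge K/4$ with $K=\ln\!\bigl(2/(\pi\delta^{2})\bigr)$. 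Clearing the $1/u$ denominator turns this into the quadratic $2u^{2}-\sqrt{K}\,u-\epsilon\ge 0$, whose positive root gives $u\ge(\sqrt{K}+\sqrt{K+8\epsilon})/4$, i.e.\ the first branch of Equation~\ref{thm:equ:my_dp}.

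Finally, the delicate step I expect to be the main obstacle is the case split on $\delta$. The tail inequality above is only useful while $K\ge 4$, equivalently $\delta\le\sqrt{2/(\pi e^{4})}$; beyond this threshold the logarithm becomes too small and the exponential tail bound is vacuous, so one must instead enforce the floor condition $\tau\ge 1$, which amounts to replacing $K$ by the constant $4$ in the same quadratic. Substituting $K=4$ simplifies $(\sqrt{K}+\sqrt{K+8\epsilon})/4$ to $(1+\sqrt{1+2\epsilon})/2$, giving the second branch and, as a useful sanity check, matching the first branch continuously at the boundary $\delta=\sqrt{2/(\pi e^{4})}$. The hard parts of the argument are (i) choosing the precise two-sided Gaussian tail form so that the constant inside $\ln(2/(\pi\delta^{2}))$ is exactly $2/\pi$ and the coefficient of $\epsilon$ in the discriminant is exactly $8$, and (ii) arguing that the floor regime $K=4$ is both necessary and sufficient for $\delta$ above the threshold so that no tighter bound is possible in the intermediate range.
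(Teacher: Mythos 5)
Your proposal is correct and follows essentially the same route as the paper's (sketched) proof: both instantiate the Dwork--Roth Gaussian-mechanism tail-bound argument at the worst case $P_{i,j}=B$ (so the effective noise multiplier is $u=c_i/B$), reduce the $(\epsilon,\delta)$ condition to a quadratic inequality in $u$ whose positive root gives the first branch of Equation~\eqref{thm:equ:my_dp}, and obtain the second branch from the large-$\delta$ regime by falling back to the requirement $u-\epsilon/(2u)\ge 1$. The only real difference is framing --- the paper splits cases on the sign of the right-hand side $\ln\frac{2}{\sqrt{2\pi}\delta}$ and the positivity of the logarithmic term, whereas you split on the validity threshold $K=\ln\frac{2}{\pi\delta^2}\ge 4$ of the tail bound --- and these yield identical branches; your explicit reduction to the saturated clip $P_{i,j}=B$ and the continuity check at the boundary are points the paper's sketch leaves implicit.
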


\begin{proof}
    See Section~\ref{sec:appendixB} in the Supplemental Material.
\end{proof}

\mapping{Comment 3.4 in R1}Theorem~\ref{thm:mydp} ensures that each round of noise injection satisfies $(\epsilon, \delta)$-DP guarantees. In the entire FL training process, each client requiring privacy protection injects noise into their updates according to Theorem~\ref{thm:mydp}'s constraints and then uploads these updates to the server. Therefore, the entire training process can theoretically guarantee DP protection.

\subsection{Convergence Analysis}

First, we introduce some assumptions, standard in the federated learning literature, for our convergence analysis.
\begin{assumption}[L-Smoothness~\cite{li2019convergence}] 
\label{Ass:1}
Objective function $F(x)$ satisfies $L$-Lipschitz Smoothness, which means that for $\forall x, y$, 
     \begin{equation}
         F(y) - F(x) \leq   \nabla F(x)^T(y-x) + \frac{L}{2}||y-x||^2,
     \end{equation}
     where $L > 0$ is a constant.
\end{assumption}

\begin{assumption}[$\mu$-Polyak-Lojasiewicz condition~\cite{karimi2016linear}] 
\label{Ass:3}
Objective function $F(x)$ satisfies $\mu$-Polyak-Lojasiewicz condition, \ie for $\forall x$, 
     \begin{equation}
         F(x) - F^* \leq   \frac{1}{2\mu}||\nabla F(x)||,
     \end{equation}
     where $\mu > L$ is a constant and $F^*$ is the global optima of $F(x)$.
\end{assumption}

\begin{assumption}[Bounded Noise] 
\label{Ass:4}
To guarantee the effectiveness of the training, we assume that the amount of noise $n_{i,j}$ added to each layer of a $J$-layer neural network is bounded by $N_c > 0$, denoted by:
\begin{equation}
    ||n_{i.j}|| \leq N_c.
\end{equation}
\end{assumption}

Given the bounded noise, the following corollary holds:
\begin{corollary}
    If the global model has $J$ layers, the additional noise $n_i^t$ in global round $t$ satisfies: 
    \begin{equation}
        ||n_i^t|| = ||\sum_{j=1}^J n_{i,j}^t|| \leq JN_c,
    \end{equation}
    where $n_{i,j} \sim N(0, \sigma_{i.j}^2)$ in \tool, $\sigma_{i,j}$ is defined in Equation~\ref{equ:adaptive_dp} and satisfying the constraint condition in Equation~\ref{thm:equ:my_dp}.
    \label{coro:1}
\end{corollary}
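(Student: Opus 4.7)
The plan is to obtain the bound as an immediate consequence of the triangle inequality combined with Assumption~\ref{Ass:4}. Writing $n_i^t = \sum_{j=1}^J n_{i,j}^t$ (interpreted in the common flattened parameter space of the $J$ layers, with each per-layer tensor embedded with zeros outside its own block), subadditivity of the norm first gives $||n_i^t|| \leq \sum_{j=1}^J ||n_{i,j}^t||$, and then Assumption~\ref{Ass:4} applied to each of the $J$ summands yields the desired bound $||n_i^t|| \leq J N_c$. That is essentially the entire argument: one invocation of the triangle inequality and one invocation of the assumption, summed $J$ times.

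I do not expect a genuine obstacle here: the corollary is essentially a bookkeeping step whose role is to lift the per-layer noise assumption to a single model-level bound, which will presumably be invoked when controlling a global quantity such as $||w_g^t - w^\ast||$ in the convergence theorem that follows. The two points I would nonetheless make explicit are: (i) the dimensional consistency noted above, so that the sum $\sum_j n_{i,j}^t$ is well-defined even though the per-layer noise tensors have differing native shapes (as flagged in Section~\ref{module3}); and (ii) compatibility of the deterministic bound in Assumption~\ref{Ass:4} with the Gaussian sampling $n_{i,j} \sim \mathcal{N}(0,\sigma_{i,j}^2)$, whose norm is not almost-surely bounded. For (ii), I would observe that $\sigma_{i,j}$ from Equation~\ref{equ:adaptive_dp} is finite because $\Delta f_{i,j}$ is bounded via Theorem~\ref{theorem_sensitivity} and $P_{i,j}$ is clipped from above by $B$ in Algorithm~\ref{alg:PE}, so Assumption~\ref{Ass:4} is meaningfully imposed on bounded realizations of the Gaussian (in practice enforced by a further clipping of the sampled noise), after which the triangle-inequality argument proceeds unchanged and the factor $J$ is simply the number of layers over which the sum is taken.
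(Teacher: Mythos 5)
Your proof is correct and coincides with the paper's (implicit) argument: the paper states Corollary~\ref{coro:1} without a separate proof, the intended justification being exactly the triangle inequality applied to $\sum_{j=1}^J n_{i,j}^t$ followed by Assumption~\ref{Ass:4} on each summand. Your side remarks on dimensional consistency and on the tension between the deterministic bound $||n_{i,j}||\leq N_c$ and the unbounded support of the Gaussian are fair points that the paper itself does not address, but they do not change the argument.
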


Subsequently, leveraging the bounded noise assumption, we derive a theorem to establish the convergence of \tool under $(\epsilon, \delta)$-privacy constraints. 
\begin{theorem}[Convergence of \tool]
Assume that Assumptions~\ref{Ass:2}-\ref{Ass:4} hold and let $L$, $\mu$, $G_c$, $N_c$ be defined therein. For a $J$-layer network, let the learning rate $\eta$ satisfy $\frac{2JN_c}{G_c} > \eta > \frac{2JN_c}{G_c} - \frac{\mu -L}{LG_c\mu}$, then we have the following convergence results:
\label{Thm:conv}
    \begin{equation}
        \begin{split}
             \quad \ \mathbb{E}[F(w^t_g)] - F^* & \leq 
            (\frac{L+\psi\mu  L}{\mu})^t(\mathbb{E}[F(w^0_g)] - F^*) \\ & +\frac{\psi + 2\phi}{2}\sum_{n=0}^{t-1} (\frac{L+\psi\mu L}{\mu})^n,
        \end{split}
    \end{equation}
    where $0 < \frac{L+\psi\mu  L}{\mu} < 1$, $\phi = \frac{L\eta^2}{2}G_c^2 + 2LJ^2N_c^2$, and  $\psi = 2JN_c - \eta G_c$.
\end{theorem}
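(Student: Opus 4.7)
The plan is to combine the $L$-smoothness inequality with the one-round FedAvg update to obtain a per-round descent lemma, and then convert it into a contractive recursion via the $\mu$-Polyak--Lojasiewicz condition. I would start by writing the single-round aggregate update as
\begin{equation*}
w_g^{t+1} - w_g^t = -\eta\,\bar g^t + \bar n^t,
\end{equation*}
where $\bar g^t$ is the weighted aggregate of the clients' clipped local gradients (with $\|\bar g^t\|\le G_c$ by Assumption~\ref{Ass:2}) and $\bar n^t$ is the aggregated Gaussian perturbation injected by Algorithm~\ref{alg:NI}, bounded by $J N_c$ through Corollary~\ref{coro:1}. Feeding this displacement into Assumption~\ref{Ass:1} yields
\begin{equation*}
F(w_g^{t+1}) - F(w_g^t) \le \nabla F(w_g^t)^\top\!\bigl(-\eta\,\bar g^t + \bar n^t\bigr) + \tfrac{L}{2}\bigl\|{-\eta\,\bar g^t + \bar n^t}\bigr\|^2.
\end{equation*}

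The next step is to split this upper bound into a term linear in $\|\nabla F(w_g^t)\|$ plus a constant error. The linear inner-product pieces are controlled by Cauchy--Schwarz together with Assumption~\ref{Ass:2} and Corollary~\ref{coro:1}, producing a coefficient that matches $\psi = 2JN_c - \eta G_c$ (the factor $2$ reflects that both signs of the inner product must be absorbed into a uniform bound). The quadratic term is expanded and each piece is bounded using the same two norm estimates, contributing exactly $\phi = \tfrac{L\eta^2}{2}G_c^2 + 2LJ^2 N_c^2$. I would then invoke the PL inequality in the form of Assumption~\ref{Ass:3}, rearranged as $\|\nabla F(w_g^t)\| \ge 2\mu\,(F(w_g^t) - F^*)$, to replace the gradient-norm term by a multiple of the optimality gap. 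Subtracting $F^*$ from both sides and taking expectations produces the one-step recursion
\begin{equation*}
\mathbb{E}[F(w_g^{t+1})] - F^* \le \frac{L + \psi\mu L}{\mu}\bigl(\mathbb{E}[F(w_g^t)] - F^*\bigr) + \frac{\psi + 2\phi}{2}.
\end{equation*}

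Finally I would unroll this recursion by a straightforward induction on $t$, which gives the stated closed form with the geometric-sum remainder. To certify that the prefactor lies in $(0,1)$, I would verify that the two-sided constraint on $\eta$ in the hypothesis is exactly equivalent to the two requirements $\psi > 0$ (which is $\eta < 2JN_c/G_c$, ensuring the linear coefficient in the descent lemma has the sign used above) and $\psi < (\mu-L)/(L\mu)$ (which is $\eta > 2JN_c/G_c - (\mu-L)/(LG_c\mu)$, equivalent to $(L+\psi\mu L)/\mu < 1$). The main obstacle I anticipate is the bookkeeping of the exact constants so that the linear and quadratic contributions collapse cleanly to $\psi$ and $\phi$; in particular, the cross term $\nabla F(w_g^t)^\top \bar n^t$ and the $\bar g^t{}^\top \bar n^t$ term inside the quadratic must be bounded deterministically through $N_c$ (rather than via the zero-mean property of the Gaussian noise) in order to reproduce the stated form of $\psi$. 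A secondary subtlety is the weighted average of heterogeneous clients inside $\bar g^t$ and $\bar n^t$; this is resolved by observing that Assumption~\ref{Ass:2} and Corollary~\ref{coro:1} give uniform per-client bounds, so convexity of the norm preserves the $G_c$ and $JN_c$ estimates under aggregation.
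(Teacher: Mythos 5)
Your high-level skeleton (one-step descent from $L$-smoothness, conversion to a contraction, unrolling, and the check that the stepsize window is exactly equivalent to $\psi>0$ together with $(L+\psi\mu L)/\mu<1$) matches the paper's, and that last equivalence you verify is correct. However, your decomposition is not the one the paper uses and it cannot reproduce the stated constants. The paper's argument tracks the difference of the \emph{noisy} models between adjacent rounds, so the noise displacement is a difference $n^{t}-n^{t-1}$ of two perturbations, each bounded by $JN_c$ via Corollary~\ref{coro:1}; that is precisely where the factor $2$ in $\psi=2JN_c-\eta G_c$ and the term $2LJ^2N_c^2=\tfrac{L}{2}(2JN_c)^2$ in $\phi$ originate, and the global model is only recovered at the very end through the FedAvg aggregation step. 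In your single-round global update $w_g^{t+1}-w_g^t=-\eta\,\bar g^t+\bar n^t$ the noise displacement is a single aggregated perturbation of norm at most $JN_c$, and your explanation of the factor $2$ (``both signs of the inner product must be absorbed'') is not a valid mechanism --- Cauchy--Schwarz never doubles a bound.

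The more serious gaps are two inequality directions you rely on without justification. First, Cauchy--Schwarz only yields $\nabla F^\top(-\eta\,\bar g)\le +\eta G_c\|\nabla F\|$; it cannot produce the \emph{negative} contribution $-\eta G_c\|\nabla F\|$ inside $\psi$. A negative linear term must be extracted by keeping $-\eta\,\nabla F^\top\bar g$ exact and treating it separately, which your plan does not do. Second, Assumption~\ref{Ass:3} as stated gives the lower bound $\|\nabla F\|\ge 2\mu\,(F-F^*)$; applied to a term $+\psi\|\nabla F\|$ with $\psi>0$ in an upper bound, this inequality points the wrong way and cannot replace the gradient norm by the optimality gap. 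The contraction coefficient $\frac{L+\psi\mu L}{\mu}$ carries a factor $L$ multiplying the gap, which can only arise from the smoothness consequence $\|\nabla F\|^2\le 2L\,(F-F^*)$ --- this is the ``reapplying Assumption~\ref{Ass:1}'' step in the paper's route, invoked after the AM--GM inequality has converted the cross terms into a squared gradient term plus a constant. Your proposal omits both the AM--GM step and this second use of smoothness, so the one-step recursion you would actually derive does not have the stated coefficient or remainder.
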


\begin{proof}
   See Section~\ref{sec:appendixC} in the Supplemental Material.
\end{proof}

Theorem~\ref{Thm:conv} demonstrates that although \tool injects random noise into the training process to ensure privacy protection for local clients' updates, this injection does not compromise the model's convergence performance when the noise size is limited. That said, \tool preserves the convergence process, thereby guaranteeing the effectiveness of the entire FL training.

\color{black}
\section{Experimental Results}
\label{sec:6}
In this section, we first describe our evaluation setup, including the FL framework, the client configurations, the datasets and models, and the attack settings. To evaluate \tool, we focus on answering the following research questions:

\textbf{RQ 1}: How is \tool's performance in terms of privacy budget, prediction accuracy, noise injection amount, and resource consumption?

\textbf{RQ 2}: Can \tool defend against real-world adversarial attacks? 

\textbf{RQ 3}:  How sensitive is \tool to privacy assessment?

All experiments were conducted using an Intel(R) Xeon(R) CPU E5-2620 v4 @ 2.20GHz processor and an NVIDIA GeForce GTX 1080 Ti graphics card. Only representative results are included in this paper.

\subsection{Experimental Setup}
\label{sec:6-1}


\heading{Federated Learning Framework.} We implement \tool on a synchronous FL system with 100 clients and one central server, simulating real-world deployment scenarios. The system architecture follows the threat model in Section~\ref{sec:threat}, with one Honest-but-Curious (HBC) client among non-malicious participants.

\heading{Data Distribution.} We evaluate under realistic non-IID conditions using \textit{Private Label Isolation}, where one target class (\eg ``dog'' in CIFAR-10) is excluded from the HBC client to simulate sensitive data protection scenarios.

\heading{Training Configuration.} All experiments use:
\begin{itemize}
	\item 400 global rounds with 10\% client activation per round.
	\item Local training: $E = 2$ epochs, batch size = 50, $\eta$ = 0.1.
	\item Differential privacy: $\delta=0.02$ (reciprocal of batch size).
	\item Gradient clipping: $G_c=20$ (validated in Section~\ref{subsec:utility}).
\end{itemize}

\heading{Datasets \& Models.} We conduct comprehensive evaluations using the following dataset-model combinations.
\begin{table}[h]
	\centering
	\caption{Dataset-Model Combinations}
	\label{tab:datasets}
	\begin{tabular}{c|c|c|c}
		\hline
		\textbf{Evaluation} & \textbf{Dataset} & \textbf{Classes} & \textbf{Model(s)} \\ \hline
		\multirow{2}{*}{Performance} & CIFAR-100 & 100 & ResNet-18, CNN \\
		& CIFAR-10 & 10 & ResNet-18, CNN \\ \hline
		\multirow{2}{*}{Attack defense} & CIFAR-10 & 10 & ResNet-18 \\
		& FEMNIST & 62 & CNN \\ \hline
	\end{tabular}
\end{table}

\heading{Attack Simulation.} We evaluate against state-of-the-art reconstruction attacks using:
\begin{itemize}
	\item ACGAN~\cite{odena2017conditional} with discriminator architecture matching the global model.
	\item Fréchet Inception Distance (FID) scores as a reconstruction quality metric (Detailed results are provided in Supplemental Material).
\end{itemize}

\subsection{Performance Evaluation (RQ 1)}
We conduct comprehensive experiments to evaluate \tool's effectiveness across five dimensions: (1) privacy budget impact, (2) model accuracy, (3) noise efficiency, (4) computational overhead, and (5) data heterogeneity impact. Our evaluation compares \tool against six baselines: vanilla FL (no DP)~\cite{mcmahan2017communication}, Full DP~\cite{9069945}, Time-Varying DP~\cite{yuan2023amplitude}, Sensitive DP~\cite{xue2023differentially}, DPA LDP~\cite{zhang2024dynamic}, and AdapLDP~\cite{10851366}. All experiments follow the setup in Section~\ref{sec:6-1}.


\begin{figure}[!tp]
	\centering
	\subfloat[ResNet-18 @ CIFAR-10]{\includegraphics[width=3in]{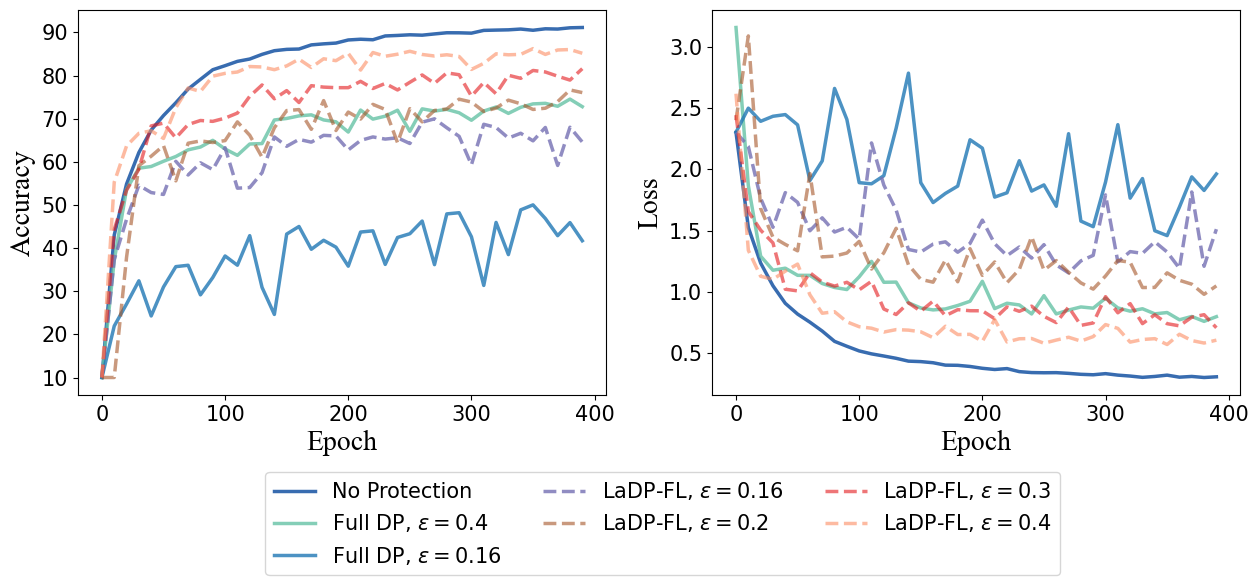}}
	\hspace{1mm}
	\subfloat[ResNet-18 @ CIFAR-100]{\includegraphics[width=3in]{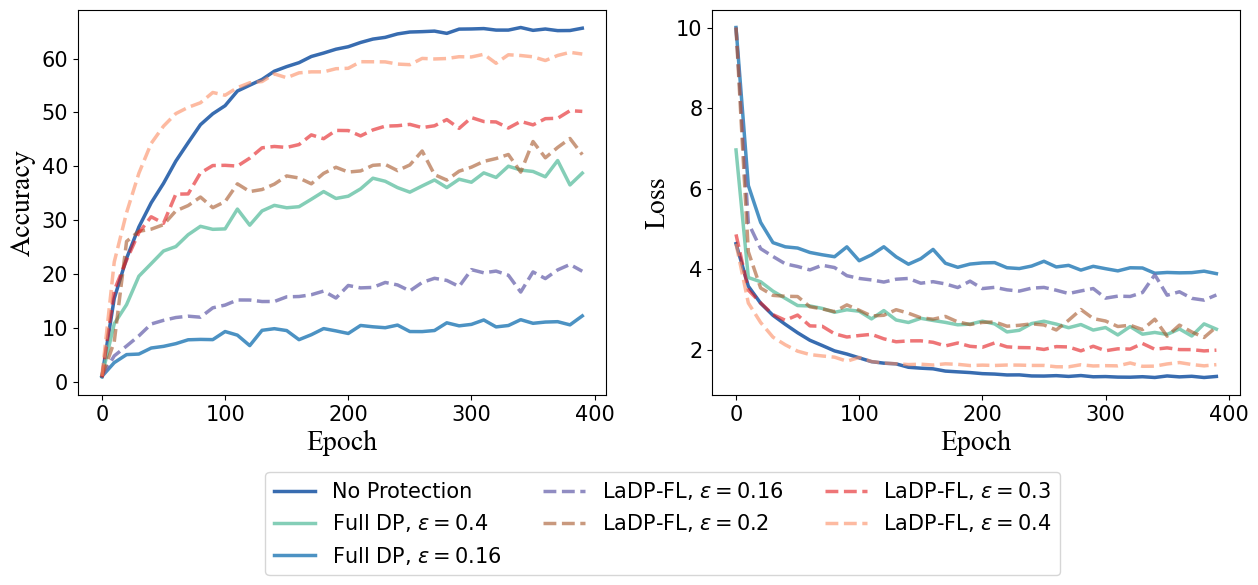}}
	\caption{The influence of the privacy budget $\epsilon$ on global accuracy and loss when using ResNet-18 on different datasets.}
	\label{fig_privacy_budget}
	\vspace{-2ex}
\end{figure}

\subsubsection{Impact of Privacy Budget}
\label{sec_impact_of_epsilon}
Figure~\ref{fig_privacy_budget} demonstrates \tool's robustness across varying privacy budgets ($\epsilon \in [0.16,0.4]$) on ResNet-18. We highlight two key findings: (1) \textbf{Accuracy Preservation:} At $\epsilon=0.16$, \tool maintains 22.4\% accuracy on CIFAR-100 vs. 12.23\% for Full DP (83.15\% improvement). As $\epsilon$ increases to 0.4, this gap narrows to 9.62\% (61.67\% vs. 52.05\%). (2) \textbf{Graceful Degradation:} When reducing $\epsilon$ from 0.4 to 0.16, \tool exhibits an average accuracy drop of 18.41\% (vs. 39.88\% for Full DP) across CIFAR-10/100 datasets, indicating more stable performance. This stability stems from \tool's layer-adaptive noise injection, which strategically prioritizes preserving critical model parameters.

\begin{table}[!t]
    \caption{Accuracy performance of \tool and baselines.}
    \centering
    \scriptsize
    \setlength{\tabcolsep}{1.5pt}
    \begin{tabular}{c|c|c|c c c c c c}
    \hline
    \multicolumn{3}{c|}{Method} & \makecell{\tool \\ (Ours)} & Full DP & \makecell{Time-\\Varying DP} & \makecell{Sensitive \\ DP} & DPA LDP & \makecell{AdapLDP} \\\hline
    \multicolumn{3}{c|}{Scenarios} & \multicolumn{6}{c}{\textbf{ResNet-18@CIFAR-10}}\\\hline
     \multirow{4}{*}{\makecell{Acc. \\ (\%)}} & \multirow{4}{*}{$\epsilon$} & \multirow{1}{*}{0.2} & \textbf{77.82} & 58.35& 63.65& 64.43 & 66.54 & 73.90\\
    & &  \multirow{1}{*}{0.3} & \textbf{82.29} & 58.58& 62.54& 67.64 & 73.77 & 77.12\\
    & &  \multirow{1}{*}{0.4} & \textbf{87.12} & 60.71& 66.79& 69.68 & 77.94 & 83.28\\
    & &  \multirow{1}{*}{0.5} & \textbf{90.06} & 70.59& 78.51& 84.68 &  79.36 & 88.39\\\hline
    \multicolumn{3}{c|}{\makecell{Average \\ Improvement Rate}} & - & 36.22\% & 24.73\% & 18.45\% & 13.44\%& 4.62\% \\\hline
    \multicolumn{3}{c|}{Scenarios} & \multicolumn{6}{c}{\textbf{CNN@CIFAR-100}}\\\hline
    \multirow{4}{*}{\makecell{Acc. \\ (\%)}} & \multirow{4}{*}{$\epsilon$} & \multirow{1}{*}{0.2} & \textbf{15.69}& 1.23 & 1.87 & 2.87 & 7.73 & 12.04\\
    & & \multirow{1}{*}{0.3} & \textbf{18.56}& 1.52& 2.03& 3.55 & 12.56 & 16.88\\
    & & \multirow{1}{*}{0.4} & \textbf{39.45}& 10.83& 22.44& 25.12 & 33.14 & 37.29\\
    & & \multirow{1}{*}{0.5} & \textbf{40.38}& 22.13& 28.93& 35.57 & 35.11& 39.26\\\hline
     \multicolumn{3}{c|}{\makecell{Average \\ Improvement Rate}} & -& 660.85\% & 417.18\% & 235.02\% & 46.20\%& 12.23\% \\\hline
    \end{tabular}
    \label{tab:acc_ret}
\end{table}

\subsubsection{Accuracy Comparison}
\label{subsec:utility}
Table~\ref{tab:acc_ret} shows the experimental results of \tool's prediction accuracy compared to SOTA DP mechanisms under identical privacy budgets ($\epsilon \in [0.2,0.5]$). We highlight three key findings: (1) \textbf{Low-Budget Superiority}: At $\epsilon=0.2$, \tool improves accuracy by 33.37\% over Full DP when training ResNet-18 on CIFAR-10, outperforming Time-Varying DP (22.26\%) and Sensitive DP (20.78\%). The advantage is more pronounced when training CNN on CIFAR-100 with 498.93\% average improvement. (2) \textbf{High-Budget Convergence}: When $\epsilon=0.5$, \tool approaches non-DP performance on CIFAR-10, while other methods show significant gaps (up to 20.14\% degradation). (3) \textbf{Consistent Gains}: Across all scenarios, \tool achieves average improvements of: 236.32\% over Full DP, 144.93\% over Time-Varying DP, 102.42\% over Sensitive DP, 25.18\% over DPA LDP, and 6.1\% over AdapLDP.

\heading{Technical Insight.} Existing approaches treat neural networks monolithically, while \tool carefully identifies and assesses the privacy capabilities of neural network layers that contribute more significantly to the final model's prediction ability. This prevents excessive noise injection into layers that do not compromise privacy but provide stronger predictive ability. Complete results and statistical significance tests are provided in Section~\ref{app:more_results} in the Supplemental Material.

\heading{Privacy-utility Tradeoff.} Overall, \tool achieves SOTA privacy-utility tradeoffs, offering stronger privacy guarantees while maintaining superior utility. In empirical evaluations on CIFAR-10 with ResNet-18, \tool outperforms Full-DP by reducing injected noise by 71.96\% and improving model accuracy by 36.23\%. Compared to Time-Varying DP and Sensitive DP, it reduces noise injection by 80.3\% and 9.4\% across all scenarios, respectively, while improving accuracy by 21.4\% and 15.9\%. Under identical privacy budgets, on average, \tool surpasses DPA LDP and AdapLDP with 25.18\% and 6.1\% higher accuracy, respectively, and reduces noise volume by 40.54\% and 68.31\%.

From a privacy perspective, \tool significantly curtails cumulative privacy budget consumption during training, achieving reductions of 63.3\% (vs. Full-DP), 76.7\% (vs. Time-Varying DP), 26.6\% (vs. Sensitive DP), 71.6\% (vs. DPA LDP), and 63.3\% (vs. AdapLDP). These results demonstrate that \tool: (1) optimizes noise injection via fine-grained calibration, minimizing utility degradation; and (2) rigorously controls privacy budget expenditure, enhancing end-to-end protection guarantees—a critical advancement for differentially private ML in practice.

\begin{table}[!t]
    \caption{Noise scale performance of \tool and baselines.}
    \centering
    \scriptsize
    \setlength{\tabcolsep}{1.5pt}
    \begin{tabular}{c|c|c|c c c c c c}
    \hline
    \multicolumn{3}{c|}{Method} & \makecell{\tool \\ (Ours)} & Full DP & \makecell{Time-\\Varying DP} & \makecell{Sensitive \\ DP} & DPA LDP & \makecell{AdapLDP}\\\hline
    \multicolumn{3}{c|}{Scenarios} & \multicolumn{6}{c}{\textbf{ResNet-18@CIFAR-10}}\\\hline
    \multirow{4}{*}{Noise Scale} & \multirow{4}{*}{$\epsilon$} & \multirow{1}{*}{0.2} &  275,447 & 727,901& 862,998 & \textbf{242,583}& 386,453 & 735,138\\
    & & \multirow{1}{*}{0.3} & \textbf{132,365}& 505,266& 295,231& 165,234 & 261,698 & 512,366 \\
    & & \multirow{1}{*}{0.4} & \textbf{87,563} & 383,945& 224,129& 95,468 & 241,778 & 402,970 \\
    & & \multirow{1}{*}{0.5} & \textbf{51,658} & 203,968& 193,487& 53,997 & 156,947 & 217,344\\\hline
    \multicolumn{3}{c|}{\makecell{Average \\ Reduction Rate}} & - & 71.96\% & 64.37\% & 4.74\% & 52.25\% & 72.80\%\\\hline
    \multicolumn{3}{c|}{Scenarios}& \multicolumn{6}{c}{\textbf{CNN@CIFAR-100}}\\\hline
    \multirow{4}{*}{Noise Scale} & \multirow{4}{*}{$\epsilon$} & \multirow{1}{*}{0.2} & \textbf{22,968} & 77,523 & 50,803& 23,854 & 35,603 & 72,781\\
    & & \multirow{1}{*}{0.3} & \textbf{19,297} & 52,652 & 34,114& 21,271 & 37,329 & 56,700\\
    & & \multirow{1}{*}{0.4} & \textbf{9,028} & 42,236 & 23,578& 9,548 & 27,668 & 44,469\\
    & & \multirow{1}{*}{0.5} & \textbf{8,677} & 25,786 & 18,270& 8,933 & 10,119 & 18,574\\\hline
    \multicolumn{3}{c|}{\makecell{Average \\ Reduction Rate}} & -& 69.68\% & 53.11\% & 5.33\% & 41.35\% & 66.85\% \\\hline
    \end{tabular}
    \label{tab:noise_ret}
\end{table}

\subsubsection{Noise Scale Comparison}
\label{subsec:privacy}
We quantify noise using the $\mathcal{L}_2$ norm, the de facto standard for perturbation measurement in adversarial learning~\cite{athalye2018obfuscated,madry2017towards}. For each privacy mechanism, we compute the cumulative noise injected over 400 global training rounds. Table~\ref{tab:noise_ret} summarizes the results.

\heading{Key Findings.} For ResNet-18/CIFAR-10 at $\epsilon=0.5$, \tool reduces noise by 74.67\% compared to Full DP~\cite{9069945}, outperforming Time-Varying DP (5.14\% reduction) and Sensitive DP (73.53\% reduction). DPA LDP achieves a 23.05\% average reduction, while AdapLDP matches Full DP’s noise volume. At stricter privacy ($\epsilon=0.2$), \tool maintains an average 41.59\% reduction while achieving higher accuracy than baselines, demonstrating superior privacy-utility tradeoffs.

We evaluate \tool against all DP variants, averaging results across all scenarios (see Section~\ref{app:more_results} in the Supplemental Material for methodology). Compared to: (1) Full DP/AdapLDP: 69.32\%/68.31\% reduction, (2) Time-Varying DP: 51.35\% reduction, (3) Sensitive DP: 1.2\% reduction (marginal), and (4) DPA LDP: 40.54\% reduction. Overall, \tool achieves a 46.14\% average noise reduction versus all SOTA methods.

\heading{Technical Insight.} Full DP and AdapLDP inject static noise, while Time-Varying DP and DPA LDP use exponential decay to reduce final noise. Sensitive DP adjusts noise per iteration via sensitivity estimation. Conversely, \tool adopts a layer-aware approach: (1) Dynamic per-layer allocation: Noise scales with layer-specific requirements. (2) Client selection variance: Fluctuations arise from differing per-round client contributions, proving adaptability to evolving privacy needs.

\begin{figure}[!tp]
\centering
\subfloat[$\epsilon = 0.2$]{\includegraphics[width=1.5in]{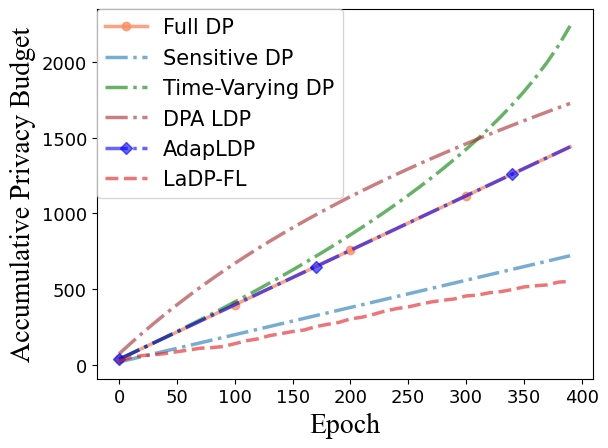}}
\hspace{1mm}
\subfloat[$\epsilon = 0.5$]{\includegraphics[width=1.5in]{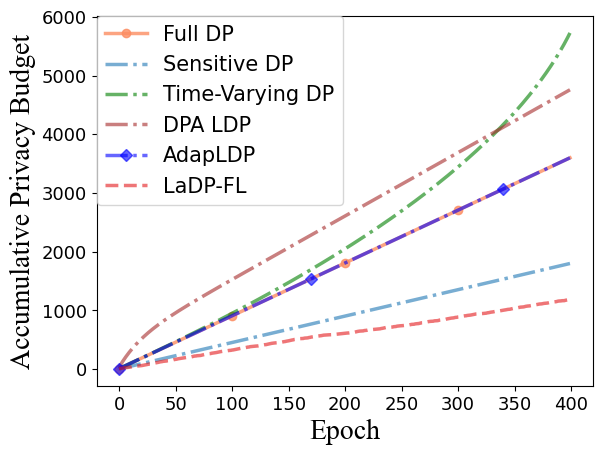}}
\caption{The comparison of accumulative privacy budget under different privacy budgets $\epsilon$ between \tool and baselines when training ResNet-18 on the CIFAR-10 dataset. Notably, the curves for Full DP and AdapLDP completely overlap; we distinguish them using distinct markers at varying intervals.}
\label{fig_privacy_account}
\vspace{-2ex}
\end{figure}

\subsubsection{Accumulative Privacy Budget Comparison}
\label{sec:accumulate_privacy_budget}
To rigorously quantify the privacy-utility trade-off, we evaluate the cumulative privacy budgets of \tool and baselines by Naive Composition Theorem~\cite{dwork2014algorithmic} (detailed derivations in Section~\ref{app:theoretical_background} of the Supplemental Material). Figure~\ref{fig_privacy_account} presents the results when training ResNet-18 on CIFAR-10 ($\delta=0.02$).

\heading{Key Findings.} Full DP demonstrates linear budget growth with training rounds (\eg slope=3.6 when training ResNet-18 under $\epsilon=0.2$), as expected from fixed-noise injection. Time-Varying DP and DPA LDP exhibit 57.3\% and 26.04\% higher cumulative $\epsilon$ than Full DP, respectively, confirming their privacy relaxations for utility gains. AdapLDP maintains an identical $\epsilon$ progression to Full DP, as both employ equivalent noise mechanisms. Sensitive DP reduces $\epsilon$ accumulation by 51.2\%, demonstrating its adaptive protection approach.

\tool achieves 63.3\% lower $\epsilon$ accumulation than Full DP through our \textit{layer-wise selective injection} and \textit{adaptive variance scaling} strategies. Crucially, this improvement is attained while maintaining model utility, as \tool's noise allocation is optimized via our proposed privacy-aware distribution mechanism. The results are statistically significant ($p=$2.64e-5 under Mann-Whitney U Test) and consistent across all four experimental scenarios (Section~\ref{app:more_results} in the Supplemental Material).

\heading{Security Implications.} The reduced $\epsilon$ accumulation of \tool indicates a stronger capability of privacy preservation. Notably, although \tool decreases the amount of noise injection, it achieves this by adaptively scaling or shrinking the variance for different layers and targeting noise injection at fewer, yet impactful layers.

\subsubsection{Resource Comparison}
\label{sec:resource_compare}

\begin{table}[!t]
    \caption{Resource utilization of \tool and baselines.}
    \centering
    \scriptsize
    \setlength{\tabcolsep}{1.5pt}
    \begin{tabular}{c|c|c|c c c c c c}
    \hline    
    \multicolumn{3}{c|}{Method} & \makecell{\tool \\ (Ours)} & Full DP & \makecell{Time-\\Varying DP} & \makecell{Sensitive \\ DP} & DPA LDP & \makecell{AdapLDP}\\\hline
    \multicolumn{3}{c|}{Scenario} & \multicolumn{6}{c}{\textbf{ResNet-18@CIFAR-10}}\\\hline
      \multirow{4}{*}{\makecell{Training \\ Duration (s)}} & \multirow{4}{*}{$\epsilon$} & \multirow{1}{*}{0.2} &  93,775 & 80,110& 86,423& 83,793 & 91,687 & 94,773\\
    & & \multirow{1}{*}{0.3} & 93,455 & 82,157 & 83,151 & 86,363 & 89,569 & 92,042 \\
    & & \multirow{1}{*}{0.4} & 94,985 & 81,511 & 87,562 & 82,443 & 90,524 & 96,111 \\
    & & \multirow{1}{*}{0.5} & 96,008 & 82,018 & 89,110 & 84,008 & 92,272 & 95,647 \\\hline
    \multicolumn{3}{c|}{\makecell{Average \\ Incremental Rate}} & - & 16.10\% & 9.28\% & 12.41\% & 3.90\% & -0.08\% \\\hline
    \multicolumn{3}{c|}{Scenario}& \multicolumn{6}{c}{\textbf{CNN@CIFAR-100}}\\\hline
      \multirow{4}{*}{\makecell{Training \\ Duration (s)}} & \multirow{4}{*}{$\epsilon$} & \multirow{1}{*}{0.2} & 16,731 & 15,120 & 16,368 & 16,081 & 16,441 & 16,934\\
    & & \multirow{1}{*}{0.3} & 16,795 & 15,236 & 15,831 & 16,092 & 16,398 & 16,765\\
    & & \multirow{1}{*}{0.4} & 16,963 & 15,045 & 15,713 & 16,045 & 16,613 & 16,782\\
    & & \multirow{1}{*}{0.5} & 16,621 & 15,013 & 15,963 & 16,140 & 16,286 & 16,980 \\\hline
    \multicolumn{3}{c|}{\makecell{Average \\ Incremental Rate}}& - & 11.09\% & 5.10\% & 4.28\% & 2.09\% & -0.51\%\\\hline
    \end{tabular}
    \label{Tab:resource_perform}
\end{table}

Our resource evaluation demonstrates two key findings: 

\heading{Key Findings.} As shown in Table~\ref{Tab:resource_perform}, compared to Full DP, \tool introduces an average of 13.56\% additional training time across scenarios. This overhead reduces to 3\%-8\% when compared to adaptive methods (Time-Varying DP, Sensitive DP, and DPA LDP), and becomes comparable to AdapLDP.

\heading{Practical Implications.} The absolute overhead (max +17.06\%) translates to $<$4 additional hours for ResNet-18 training - a reasonable tradeoff for the achieved privacy benefits (Section~\ref{subsec:utility}). We note this overhead would diminish proportionally with hardware improvements, as the computational complexity remains within $O(J)$ of baseline methods, where $J$ denotes the depth of the neural network.


\subsubsection{\mapping{Comment 1.7 in R1 \& Comment 1.3 \& 1.6 in R2}Impact of Data Heterogeneity}
\label{sec:impact_of_noniid}

Our previous evaluation (Sections~\ref{sec_impact_of_epsilon}--\ref{sec:resource_compare}) used a non-IID distribution (Section~\ref{sec:6-1}) where non-malicious clients had complete label coverage while only HBC clients lacked the private label. To evaluate \tool's resilience to data heterogeneity, we conduct experiments with ResNet-18 on CIFAR-10 under two pathological non-IID distributions: (1) \textbf{Distribution I (Label Scarcity):} 99 non-malicious clients have data with 4 labels (including the private label), while 1 HBC client has 3 labels (excluding the private label). All clients maintain equal data quantities. (2) \textbf{Distribution II (Quantity Skew):} Non-private data follows a Hetero-Dirichlet Distribution~\cite{JMLR:v24:22-0440} ($\alpha=0.01$) across all clients, while private data follows another Hetero-Dirichlet distribution ($\alpha=0.01$) among the 99 non-malicious clients. Figure~\ref{fig_non_iid} visualizes these distributions by showing data allocations for 4 randomly selected non-malicious clients and 1 HBC client, with Label 10 representing private data.

\begin{figure}[!htp]
	\centering
	\subfloat{\includegraphics[width=0.9\linewidth]{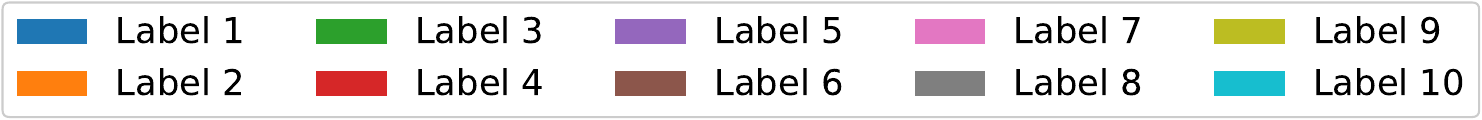}}
	\setcounter{subfigure}{0}
	\subfloat[General Distribution]{\includegraphics[width=0.32\linewidth]{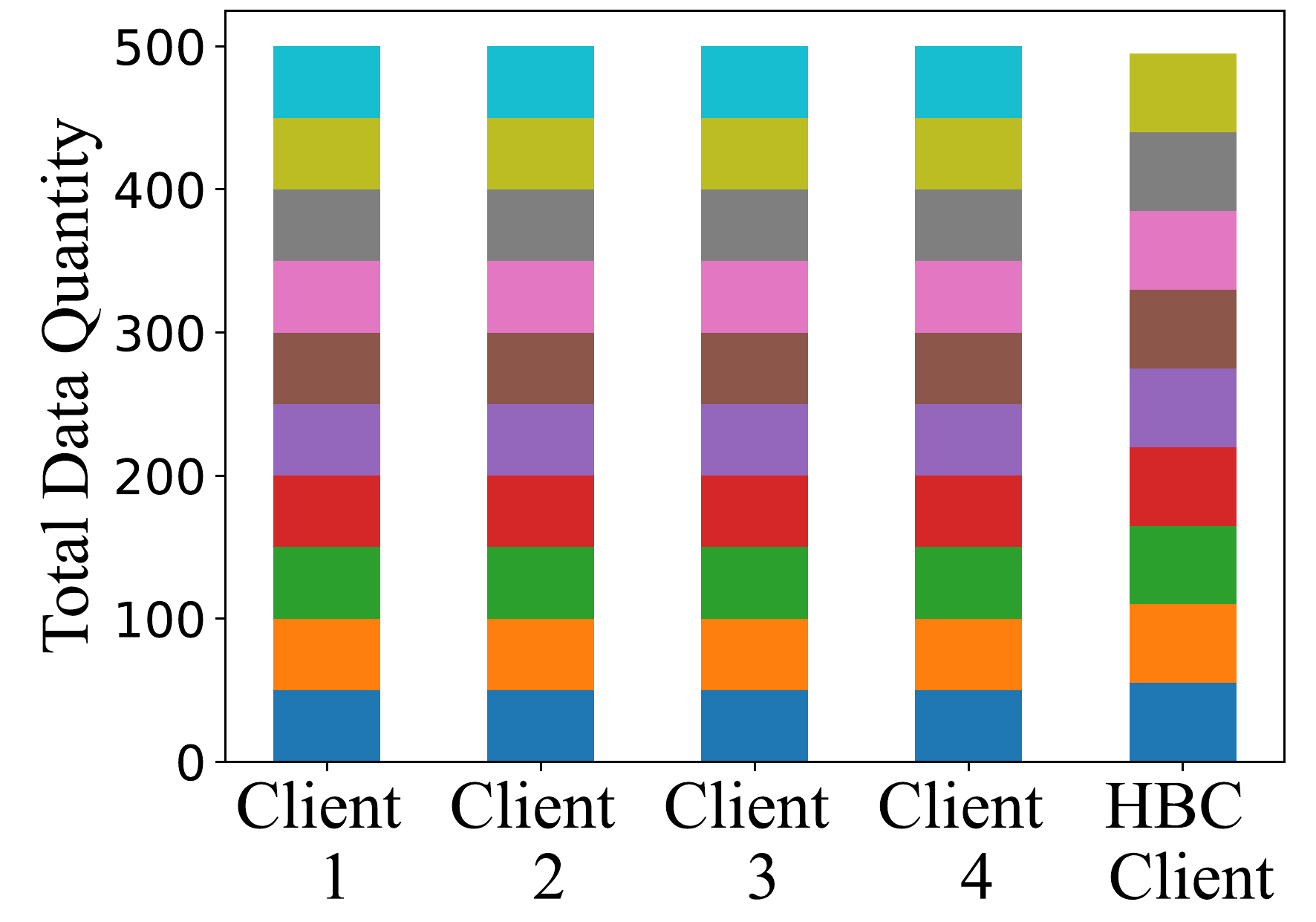}\label{fig:gen-distr}}
	\subfloat[Distribution I]{\includegraphics[width=0.32\linewidth]{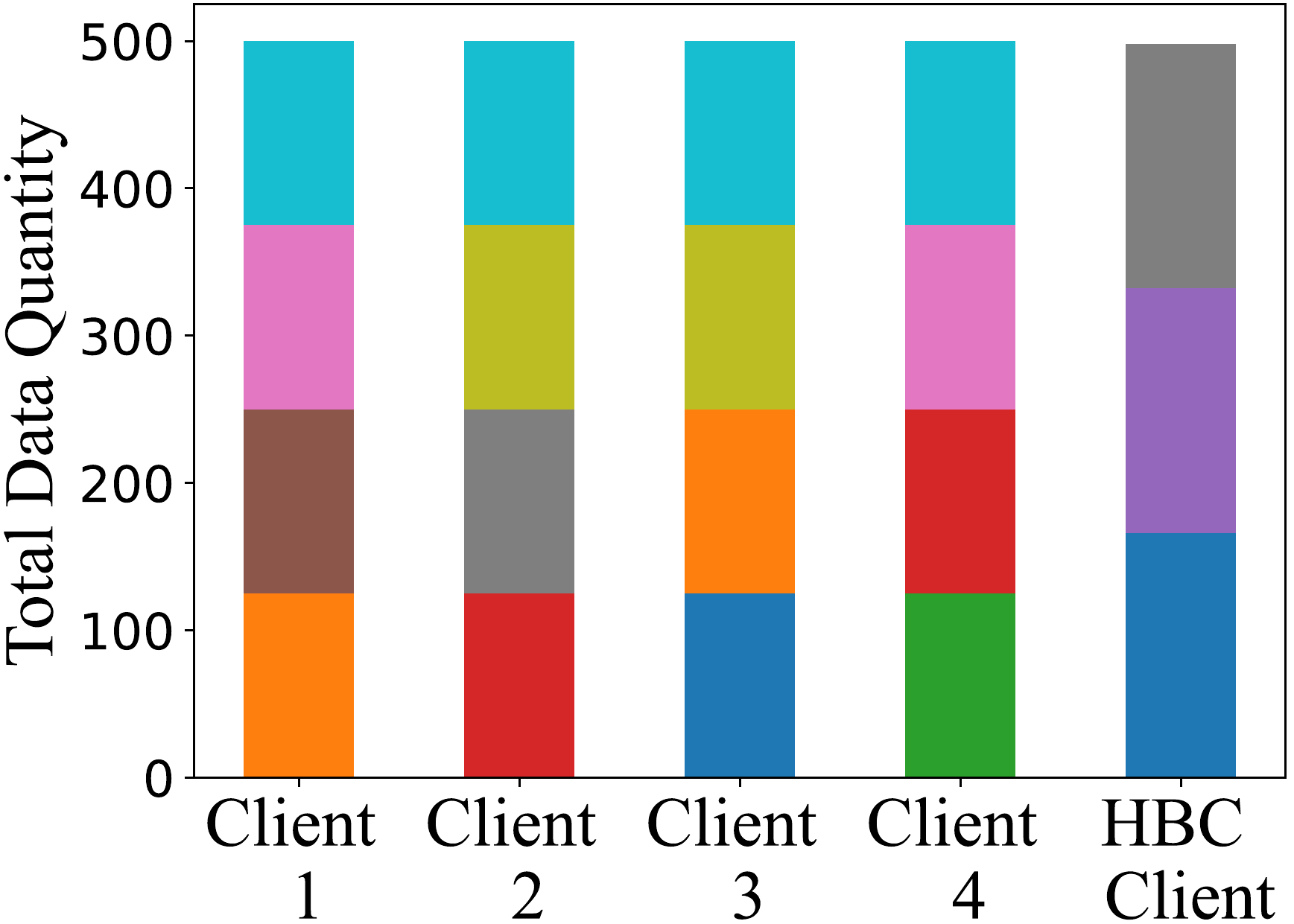}}
	\subfloat[Distribution II]{\includegraphics[width=0.32\linewidth]{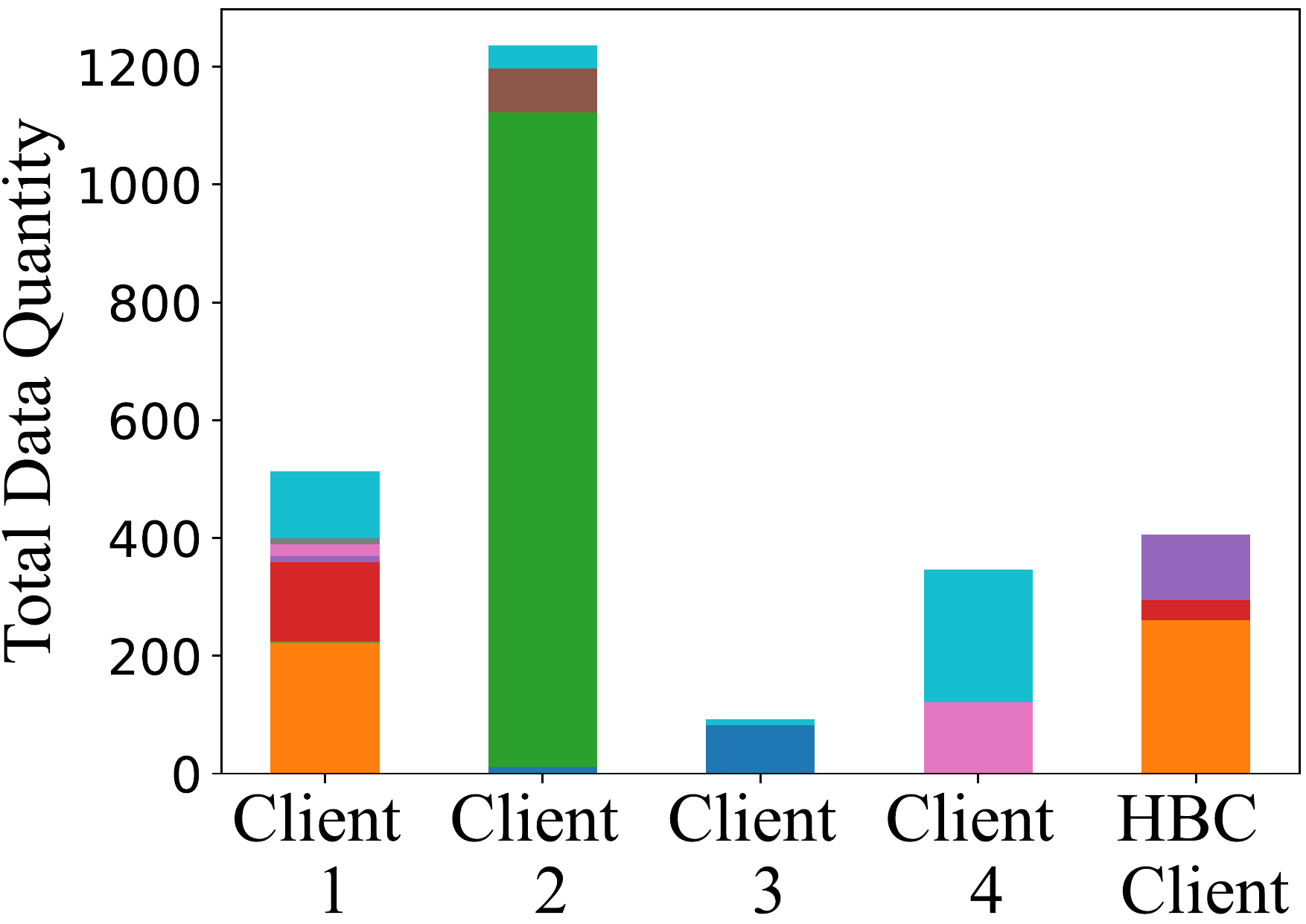}}
	\caption{Data distributions across clients under different heterogeneity settings. Label 10 denotes private data.}
	\label{fig_non_iid}
\end{figure}

\heading{Key Findings.}	(1) \textbf{Accuracy Superiority:} \tool maintains significant accuracy advantages: 38.1\% over Full DP, 25.5\% over Time-Varying DP (Distribution I), and 80.5\% over Full DP, 51.5\% over Time-Varying DP (Distribution II). (2) \textbf{Noise Efficiency:} Despite distribution skewness, \tool achieves 3.09\% average noise reduction vs. Sensitive DP and minimal noise inflation compared to fixed-schedule baselines. (3) \textbf{Robustness:} No failure cases observed in privacy-utility tradeoffs, demonstrating consistent superiority across all tested scenarios.

\heading{Technical Insight.} The results stem from \tool's dynamic noise adaptation through its Layer Selection and Privacy Estimation Modules, contrasting with the static approaches of Full DP (727,894 noise scale) and Time-Varying DP (356,555 noise scale). While extreme heterogeneity necessitates slight noise increases (251,207 $\rightarrow$ 274,003), \tool's adaptive mechanism preserves better utility than sensitivity-based alternatives like Sensitive DP (253,336 $\rightarrow$ 283,316).

\begin{table}[!t]
	\centering
	\caption{Performance under extreme data heterogeneity ($\epsilon \in \{0.2,0.5\}$).}
	\scriptsize
	\begin{tabular}{c|l|c|c|c|c}
		\hline
		\multirow{2}{*}{$\epsilon$} & \multirow{2}{*}{Method} & \multicolumn{2}{c|}{Distribution I} & \multicolumn{2}{c}{Distribution II} \\ \cline{3-6}
		&  & Acc. (\%)& Noise Scale & Acc. (\%) & Noise Scale \\ \hline
		- & No Protection & 61.33 & - & 45.16 & - \\ \hline
		\multirow{6}{*}{0.2} & Full DP & 30.32 & 727,894 & 15.13 & 727,895\\
		& Time-Varying DP & 34.44 & 356,555 & 17.77 & 356,558\\
		& Sensitive DP & 38.76 & 253,336 & 21.39 & 283,316 \\
		& DPA LDP & 39.07 & 398,653 & 24.95 & 398,153 \\
		& AdapLDP & 35.24 & 719,643 & 19.1 & 700,645 \\
		& \textbf{LaDP-FL (Ours)} & \textbf{42.13} & \textbf{251,207} & \textbf{26.86} & \textbf{274,003}\\ \hline
		\multirow{6}{*}{0.5} & Full DP & 35.49 & 203,955 & 17.33 & 211,161  \\
		& Time-Varying DP & 37.88 & 178,278 & 20.94 & 162,628\\
		& Sensitive DP & 40.11 & 87,669 & 26.03 & 112,989\\
		& DPA LDP & 43.77 & 150,691 & 27.91 & 140,293 \\
		& AdapLDP & 46.62 & 197,558 & 29.52 & 229,980 \\
		& \textbf{LaDP-FL (Ours)} & \textbf{48.73} & \textbf{84,013} & \textbf{31.79} & \textbf{108,364}\\\hline
	\end{tabular}
	\label{tab:heterogeneous}
\end{table}

\subsection{Defense Performance (RQ 2)}
\label{sec:defense-eval}
We evaluate \tool's defense against targeted class-level reconstruction attacks (Section~\ref{sec:6-1}), where adversaries aim to recover private-class data (\eg class `5'). Unlike pixel-wise metrics (PSNR/SSIM), we focus on distributional similarity using the Fréchet Inception Distance (FID), which quantifies the Wasserstein-2 distance between real and reconstructed private-class distributions in Inception-v3 feature space.

\begin{figure}[!tp]
\centering
\includegraphics[width=3in]{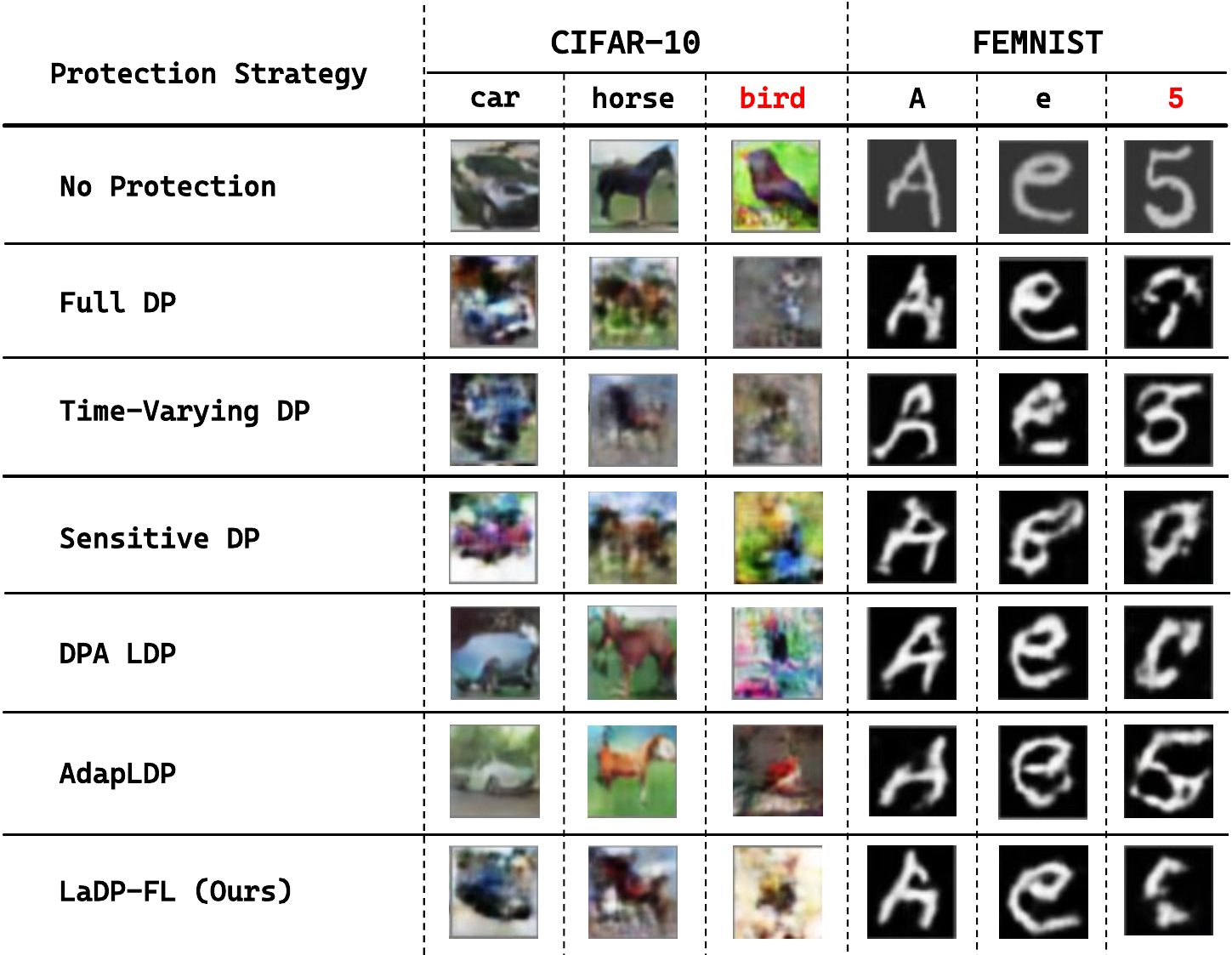}
\caption{The defensive capabilities of different privacy defense strategies against private data reconstruction attacks, where the black label indicates the class data shared by all clients and the red label indicates the private data only possessed by the victim clients.}
\label{fig_def}
\end{figure}

\heading{Visual Reconstruction Analysis.}
Figure~\ref{fig_def} shows the reconstruction results, where black-labeled classes represent public data shared across all clients, and red-labeled classes indicate private data exclusive to victim clients. Among all protection strategies, \tool provides the strongest protection, as the images reconstructed by the HBC client based on \tool are the most distorted from their counterparts. \tool achieves an FID of 121.02 and 83.13 on the privacy data within CIFAR-10 and FEMNIST, respectively (vs. 52.77 and 24.35 for no protection), indicating an average of 17.58\% greater protection than Full DP (FID=110.65 and 66.1). 

\subsection{Necessity of Privacy Estimation (RQ 3)}
To rigorously evaluate the impact of KL-divergence-based privacy estimation in \tool, we conduct a controlled ablation study comparing two variants: (1) \tool with privacy estimation enabled and (2) \tool with privacy estimation disabled. We measure their performance across two key metrics: model utility (test accuracy) and noise injection dynamics, as illustrated in Figure~\ref{fig_kl_important}.

\heading{Key Findings.} (1) \textbf{Utility Preservation:} Disabling privacy estimation degrades model accuracy by 15.97\% (high noise, $\epsilon = 0.2$) and 4.74\% (low noise, $\epsilon = 0.5$), demonstrating that adaptive noise calibration is critical for balancing privacy-utility trade-offs. (2) \textbf{Noise Adaptation:} Without privacy estimation, noise injection remains relatively static across layers and clients. In contrast, \tool’s privacy-aware optimization dynamically tunes noise magnitudes per layer and client, reducing redundancy (\eg lowering noise in non-sensitive layers) while enforcing stronger protection where needed. This adaptability improves privacy guarantees without sacrificing accuracy.

Our results validate that KL-divergence-based estimation is not merely auxiliary but fundamental to \tool's design. The module’s ability to quantify and react to privacy risks in real time distinguishes it from static DP-FL variants, enabling finer-grained noise allocation.

\begin{figure}[!tp]
	\centering
        \subfloat{\includegraphics[width=0.9\linewidth]{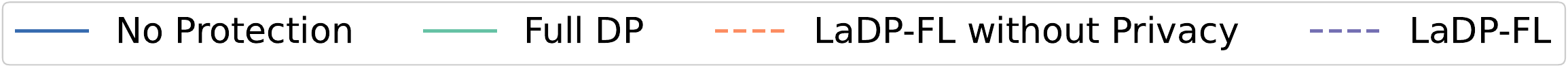}}
	\setcounter{subfigure}{0}
	\subfloat[Accuracy]{\includegraphics[width=0.3\linewidth]{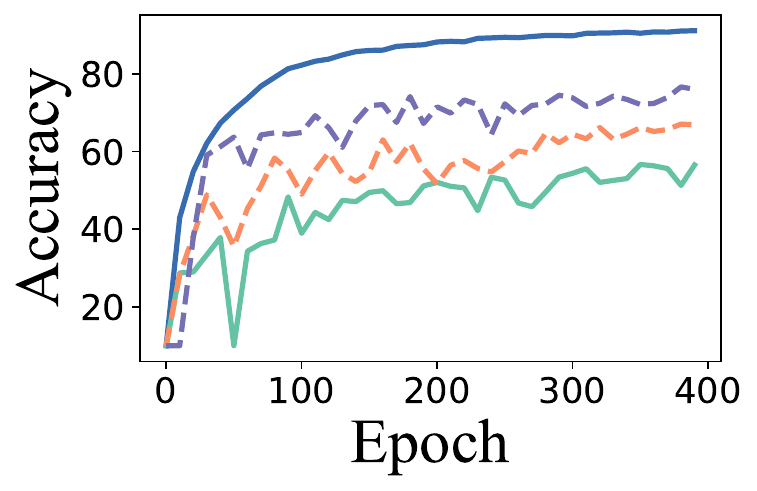}}
        \subfloat[Loss]{\includegraphics[width=0.3\linewidth]{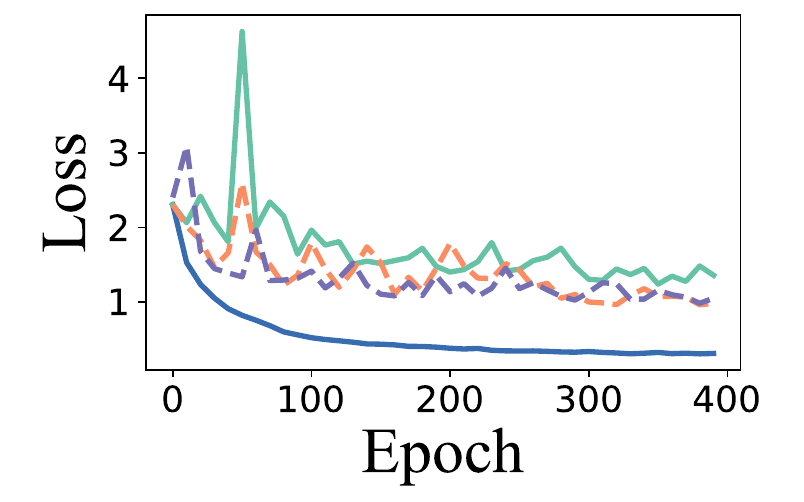}}
	\subfloat[Noise scale]{\includegraphics[width=0.3\linewidth]{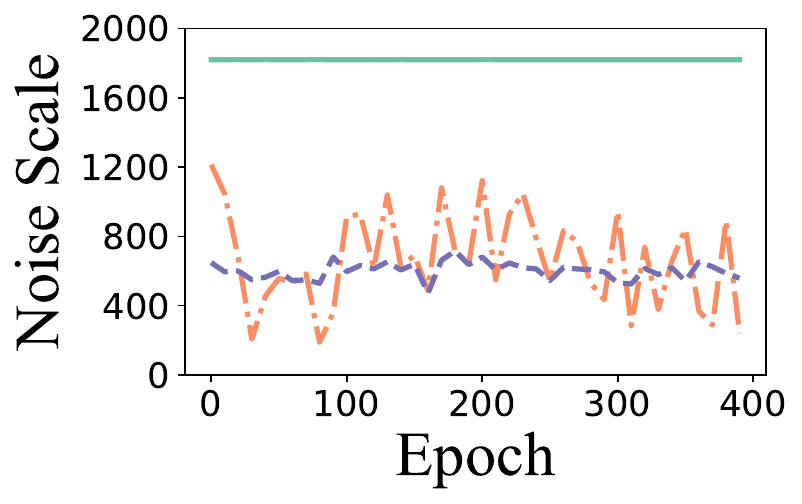}}
	\hspace{2mm}
	\hspace{2mm}
	\subfloat[Accuracy]{\includegraphics[width=0.3\linewidth]{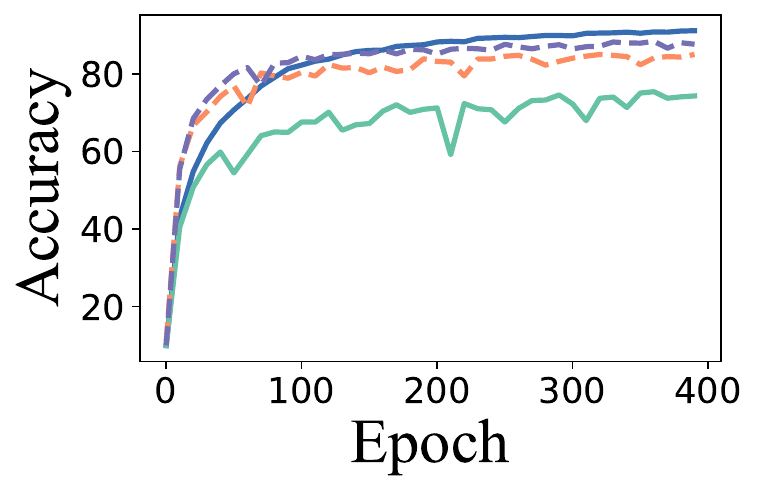}}
        \subfloat[Loss]{\includegraphics[width=0.3\linewidth]{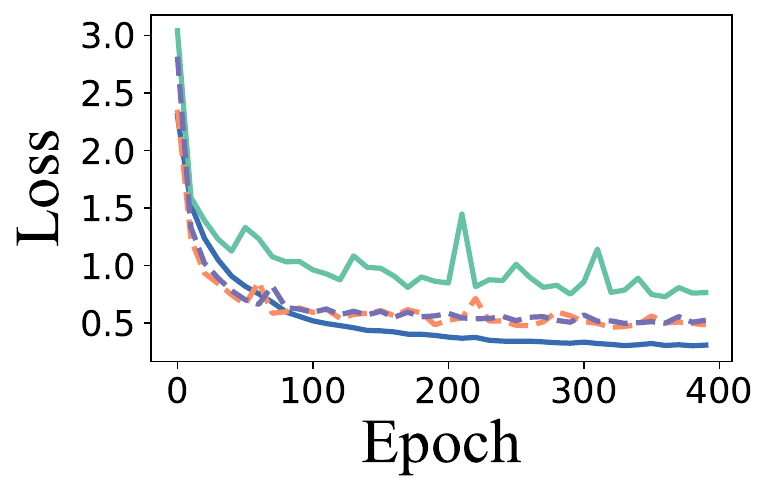}}
	\subfloat[Noise scale]{\includegraphics[width=0.3\linewidth]{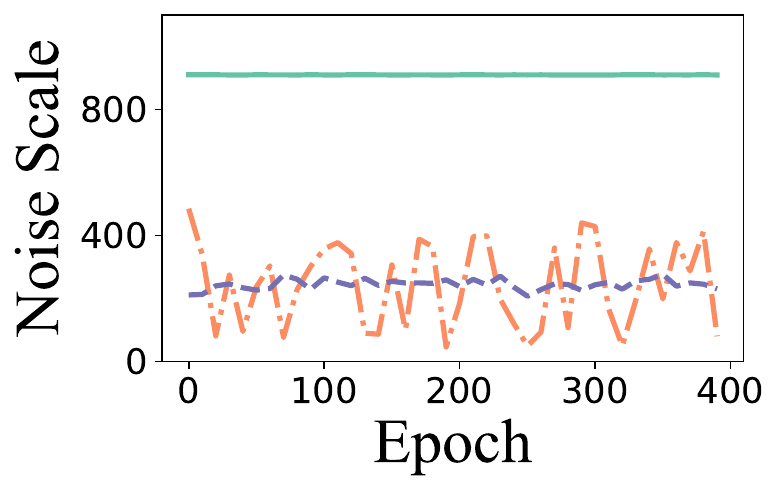}}
	\caption{The impact of privacy assessment component on global accuracy, loss function, and noise injection under $\epsilon=0.2$ ((a)-(c)) and $\epsilon=0.5$ ((d)-(f)).}
	\label{fig_kl_important}
\end{figure}



\section{Discussion}

\subsection{\mapping{Comment 2.1 in R1}Implementation Complexity and Layer Selection}
\label{subsec:complexity}
While selecting critical layers for noise injection introduces additional computations compared to full-model DP approaches, \tool's complexity remains linear with respect to network depth (\ie $O(J)$ for $J$ layers). This is identical to the complexity of other DP-based FL approaches, as those algorithms must also iterate through all model layers and generate noise matrices matching each layer's parameter size for perturbation. In contrast, after processing all layers through its layer selection module, \tool only perturbs a small number of selected layers, thereby maintaining this linear complexity. As shown in Table~\ref{Tab:resource_perform}, these optimizations result in only 10.23\%-17.05\% increased computation time compared to Full DP, while providing superior privacy-utility tradeoffs. For very deep networks, our experiments on ResNet-152 (see results in the Supplemental Material) demonstrate that \tool achieves 53.04\% of the accuracy improvement rate while only introducing overhead by 7.27\% compared to Full DP.

\subsection{\mapping{Comment 1.7 in R1 \& Comment 2.4 in R1 \& Comment 1.3 in R2}Scalability in Large-Scale Federated Learning}
\label{subsec:scalability}

\heading{Deep Model Scalability:} As discussed in Section~\ref{subsec:complexity}, \tool achieves the same complexity, $O(J)$, as other DP-based FL algorithms. When deploying \tool on large models, all algorithms incur additional computational overhead. While advanced hardware resources (\eg more powerful GPUs) or optimization algorithms (\eg LoRA~\cite{hu2022lora}) can mitigate this computational burden, addressing this challenge is beyond the scope of this paper.

\heading{Multi-Client Considerations:} The non-IID robustness analysis (Sections~\ref{module3} and~\ref{sec:impact_of_noniid}) confirms \tool's effectiveness across heterogeneous clients. Meanwhile, \tool does not require any collaboration between clients, as all operations are performed independently by each client, ensuring scalability with respect to the number of clients.

\heading{Communication Efficiency:} Compared to Full DP (the fundamental DP-based FL algorithm), \tool requires no additional information transmission during communication, maintaining high efficiency. Furthermore, because \tool imposes no communication-specific constraints, it remains entirely orthogonal to existing advanced FL communication optimization algorithms~\cite{chai2020fedat,wu2022kafl,zhang2023fedmds} and can be seamlessly integrated with them.

\subsection{\mapping{Comment 1.6 in R1}KL Divergence and Noise Injection Sensitivity}
\label{subsec:kl_relationship}

The KL divergence ($P_{i,j}$) quantifies the information distance between local and global layer distributions, serving as our privacy risk indicator. The noise injection process follows Equation~\ref{equ:adaptive_dp}. In this equation, when a small change $\Delta P$ in $P_{i,j}$ is introduced, we have $\sigma_{i,j} + \Delta \sigma =  \frac{c_i \Delta f_{i,j}}{\epsilon (P_{i,j}+\Delta P)}$. By substituting $\sigma_{i,j}$ from Equation~\ref{equ:adaptive_dp} into the equation, we obtain $\Delta \sigma = -\frac{c_i \Delta f_{i,j}}{\epsilon P_{i,j}(P_{i,j}/\Delta P+1)}$. Given that $c_i, \Delta f_{i,j}, \epsilon, P_{i,j} \geq 0$, we derive $|\Delta \sigma| = \frac{c_i \Delta f_{i,j}}{\epsilon P_{i,j}(P_{i,j}/|\Delta P|+1)}$.

This design ensures: (1) \textbf{Proportional Protection:} noise injection exhibits a reciprocal square relationship with the KL divergence (indicating the level of private information). (2) \textbf{Stability:} The reciprocal relationship prevents noise explosion when $P_{i,j}\rightarrow 0$ through our clipping bound $B$. (3) \textbf{Adaptivity:} The $\Delta\sigma/\Delta P$ sensitivity derived above automatically adjusts noise scales during training. As shown in Figure~\ref{fig_kl_important}, this mechanism reduces unnecessary noise by 69.32\% compared to Full DP while maintaining privacy guarantees.

\subsection{\mapping{Comment 1.3 in R2}Practical Deployment Considerations}
\tool's modular design enables integration with existing FL frameworks like TensorFlow Federated and PySyft, as well as compatibility with common operating systems such as Windows and Ubuntu, facilitating deployment across most device environments. However, it still faces deployment challenges inherent to all FL systems, such as heterogeneous battery resources and unstable network conditions. Notably, our primary objective is to enhance the privacy-utility trade-off in privacy-sensitive scenarios, while addressing these specific deployment limitations falls outside the scope of this work.

\section{Conclusion}
\label{sec:7}

We present \tool, the first layer-wise adaptive noise injection framework for FL that achieves an optimal privacy-utility tradeoff. By selectively perturbing critical layers based on their privacy level (measured via KL divergence) and contribution to model utility, \tool significantly reduces unnecessary noise injection while maintaining strong DP guarantees. Theoretical analysis confirms \tool's convergence and asymptotic optimality under standard assumptions.

Empirical results demonstrate that \tool outperforms SOTA DP-based FL methods across multiple metrics: (1) \textbf{Accuracy:} Achieves 102.99\% higher accuracy on average than baselines (\eg 236.32\% over Full DP). (2) \textbf{Efficiency:} Reduces noise injection by 46.14\% while mitigating privacy budget accumulation by 60.3\% on average. (3) \textbf{Robustness:} Effectively thwarts reconstruction attacks, with HBC clients failing to recover private data (\eg $>$12.84\% increase in the FID of the reconstructed private data.).

\tool's design is agnostic to model architectures and scales efficiently with the number of clients. Future work includes extending \tool to asynchronous FL settings and further optimizing its overhead for large-scale models.
\color{black}

\normalem
\bibliographystyle{IEEEtran}
\bibliography{IEEEabrv,file}

\clearpage
$$\textbf{\LARGE Supplemental Material}$$
\section{Proof of Theorem~\ref{theorem_sensitivity}} 
\label{sec:appendixA}

Due to the page limit, we only provide the proof sketch here. More details can be found in the history-revised version.

\textbf{Proof Sketch of Theorem~\ref{theorem_sensitivity}:}
Based on existing works~\cite{zhou2022pflf}, the evaluation function $f(\cdot)$ under Differential Privacy (DP) theory during neural network training is generally defined as the final model output on the corresponding training dataset. To evaluate the sensitivity within the DP framework, we need to compute the difference between the outputs of the evaluation function on two adjacent datasets and take the maximum norm of this difference across all such pairs. At this point, performing one gradient descent step on both evaluation functions $f(x)$ and $f(x')$ and subsequently applying the triangle inequality allows us to separate the gradient term from the model term. This yields the difference between the final model outputs on the two adjacent datasets at the previous epoch and the difference between their gradients at the previous epoch. For the gradient term, Assumption~\ref{Ass:2} establishes that gradients are bounded, enabling their constraint within a constant range. Consequently, we derive an iterative relationship: the sensitivity at epoch $E$ relates to the sensitivity at epoch $E-1$ plus a constant. Iterating this relationship back to epoch 0, where both models remain untrained on the dataset, results in identical initial outputs and zero differences. Only the accumulated gradient term, bounded by the constant term, persists. This accumulated term thus serves as the upper bound for the sensitivity within the DP framework for the model, concluding the proof.

\section{Proof of Theorem~\ref{thm:mydp}} 
\label{sec:appendixB}

Due to the page limit, we only provide the proof sketch here. More details can be found in the history-revised version.

\textbf{Proof Sketch of Theorem~\ref{thm:mydp}:}
Building upon the result from the book~\cite{dwork2014algorithmic}, we continue the derivation for the differential privacy noise of the Gaussian mechanism. Utilizing the tail bound of the Gaussian distribution for bounding, we obtain an expression consisting of a logarithmic term and a quadratic term, requiring their sum to exceed $\ln{\frac{2}{\sqrt{2\pi}\delta}}$. We analyze two distinct cases: $\sqrt{\frac{2}{\pi}} \leq \delta < 1$ and $0 < \delta < \sqrt{\frac{2}{\pi}}$. For the first case, the right-hand side of the constraint inequality derived from the tail bound is negative, while the quadratic term on the left-hand side is always positive. Consequently, it suffices to ensure the logarithmic term on the left is greater than zero, leading to a constrained quadratic inequality from which the result for this case readily follows. For the second case, the right-hand side of the constraint inequality is positive, necessitating a tighter bound. Under this condition, the logarithmic term on the left is readily seen to be always positive; thus, we only require the quadratic term to exceed the right-hand side. This again yields a constrained quadratic inequality, from which the result for this case is obtained. By examining these two scenarios separately, we arrive at the final conclusion of the theorem, concluding the proof.

\section{Proof of Theorem~\ref{Thm:conv}}
\label{sec:appendixC}

Due to the page limit, we only provide the proof sketch here. More details can be found in the history-revised version.

\textbf{Proof Sketch of Theorem~\ref{Thm:conv}:}
We first consider the difference between the two noisy local models from adjacent rounds. Expanding one local model via gradient descent allows us to express this difference as the sum of a negative gradient term and the noise difference between the two adjacent rounds. Leveraging this insight, and invoking Assumption~\ref{Ass:1} along with the Cauchy-Schwarz inequality, we can bound the difference in the loss values of the noisy models across adjacent rounds by the sum of a gradient term, a noise term, and a cross-product term. Subsequently, based on Corollary~\ref{coro:1}, Assumption~\ref{Ass:3}, and the AM-GM inequality, we further bound this difference by the squared gradient term and a constant. Reapplying Assumption~\ref{Ass:1} and subtracting the ideal optimum value $F^*$ from both sides of the inequality allows us to relate the difference between the noisy local models and $F^*$ at round $t$ to that at round $t-1$. Iterating this relationship from round $0$ to round $t-1$ demonstrates that the difference between the noisy local model at round $t$ and $F^*$ is bounded by the initial difference at round $0$ plus a constant term. The model at round $0$ is the initialization, which qualifies as a constant; hence, the upper bound on the difference between the noisy local model at round $t$ and $F^*$ is constrained by a constant value. Building upon this result and utilizing the aggregation process of the FedAvg strategy, we bound the difference between the global model at round $t$ and $F^*$ by a constant, thereby establishing the conclusion of the theorem, concluding the proof.

\begin{table*}[!t]
    \caption{Final accuracy performance of \tool and baselines. ``Rate'' means the accuracy improvement rate of \tool compared to each SOTA work.}
    \centering
    \scriptsize
    \setlength{\tabcolsep}{3pt}
    \begin{tabular}{c|c|c|c c c c c c|c c c c c c} 
    \hline
    \multicolumn{3}{c|}{Scenarios} & \multicolumn{6}{c|}{\textbf{ResNet-18@CIFAR-10}}& \multicolumn{6}{c}{\textbf{ResNet-18@CIFAR-100}}\\\hline
    \multicolumn{3}{c|}{Method} & \makecell{\tool \\ (Ours)} & Full DP & \makecell{Time-\\Varying DP} & \makecell{Sensitive \\ DP} & DPA LDP & \makecell{AdapLDP} & \makecell{\tool \\ (Ours)} & Full DP & \makecell{Time-Varying \\DP} & \makecell{Sensitive \\ DP} & DPA LDP & \makecell{AdapLDP} \\\hline
    \multirow{8}{*}{$\epsilon$} & \multirow{2}{*}{0.2} & Acc. (\%) & \textbf{77.82} & 58.35& 63.65& 64.43 & 66.54 & 73.90 & \textbf{41.75}& 10.13& 22.32& 29.28 & 36.18& 38.47\\
    & & Rate (\%)& - & 33.37 & 22.26 & 20.78 & 16.95& 5.30 & -& 312.14 & 87.05 & 42.59 & 15.40 & 8.53\\\cline{2-15}
    & \multirow{2}{*}{0.3} & Acc. (\%) & \textbf{82.29} & 58.58& 62.54& 67.64 & 73.77& 77.12 & \textbf{50.93}& 22.14& 33.61&  40.27 & 43.52& 48.89\\
    & & Rate (\%)& - & 40.47 & 31.58 & 21.66 & 11.55& 6.70 & -& 130.04 & 51.53 & 26.47 & 17.03 & 4.17 \\\cline{2-15}
    & \multirow{2}{*}{0.4} & Acc. (\%) & \textbf{87.12} & 60.71& 66.79& 69.68 & 77.94 & 83.28 & \textbf{53.42}& 24.18& 39.13& 42.39 & 46.75& 50.86\\
    & & Rate (\%)&- & 43.50 & 30.44 & 25.03 & 11.78 & 4.61 & - & 120.93 & 36.52 & 26.02 & 14.27 & 5.03 \\\cline{2-15}
    & \multirow{2}{*}{0.5} & Acc. (\%) & \textbf{90.06} & 70.59& 78.51& 84.68 & 79.36 & 88.39 & \textbf{61.77}& 41.2& 49.35&  55.53 & 53.04& 60.43\\
    & & Rate (\%)& - & 27.58 & 14.71 & 6.35 & 13.48 & 1.89 & -& 49.93 & 25.17 & 11.24 & 16.46 & 2.22\\\hline
    \multicolumn{3}{c|}{\makecell{Average \\ Improvement Rate}} & - & 36.23\% & 24.75\% & 18.46\% & 13.44\%& 4.63\% & -& 153.26\% & 50.07\% & 26.58\% & 15.79\%& 4.99\% \\\hline
    \multicolumn{3}{c|}{Scenarios} & \multicolumn{6}{c|}{\textbf{CNN@CIFAR-10}}& \multicolumn{6}{c}{\textbf{CNN@CIFAR-100}}\\\hline
    \multicolumn{3}{c|}{Method} & \makecell{\tool \\ (Ours)} & Full DP & \makecell{Time-Varying \\DP} & \makecell{Sensitive \\ DP} & DPA LDP & \makecell{AdapLDP} & \makecell{\tool \\ (Ours)} & Full DP & \makecell{Time-Varying \\DP} & \makecell{Sensitive \\ DP} & DPA LDP & \makecell{AdapLDP} \\\hline
    \multirow{8}{*}{$\epsilon$} & \multirow{2}{*}{0.2} & Acc. (\%) & \textbf{38.81} & 10.00& 10.00 & 12.23 & 22.97& 35.79 & \textbf{15.69}& 1.23 & 1.87 & 2.87 & 7.73 & 12.04\\
    & & Rate (\%)& - & 288.10 & 288.10 & 217.33 & 68.96 & 8.4A & -& 1175.61 & 739.04 & 446.69 & 102.98 & 30.32 \\\cline{2-15}
    & \multirow{2}{*}{0.3} & Acc. (\%) & \textbf{52.13} & 38.15& 42.31& 13.45 & 46.82 & 51.60 & \textbf{18.56}& 1.52& 2.03& 3.55 & 12.56 & 16.88 \\
    & & Rate (\%)& - & 36.64 & 23.21 & 287.58 & 11.34 & 1.03 & -& 1121.05 & 814.29 & 422.82 & 47.77 & 9.95\\\cline{2-15}
    & \multirow{2}{*}{0.4} & Acc. (\%) & \textbf{68.93} & 50.75& 52.37& 64.11 & 60.37 & 67.43 & \textbf{39.45}& 10.83& 22.44& 25.12 & 33.14 & 37.29 \\
    & & Rate (\%)& - & 35.82 & 31.62 & 7.52 & 14.18 & 2.22 & -& 264.27 & 75.80 & 57.05 & 19.04 & 5.79 \\\cline{2-15}
    & \multirow{2}{*}{0.5} & Acc. (\%) & 75.36 & 63.19& 69.74& 71.07 & 70.66 & \textbf{76.49} & \textbf{40.38}& 22.13& 28.93& 35.57 & 35.11& 39.26\\
    & & Rate (\%)& - & 19.26 & 8.06 & 6.04 & 6.65 & -1.48 & -& 82.47 & 39.58 & 13.52 & 15.01 & 2.85\\\hline
    \multicolumn{3}{c|}{\makecell{Average \\ Improvement Rate}} & - & 94.96\% & 87.75\% & 129.62\% & 25.28\%& 2.55\% & -& 660.85\% & 417.18\% & 235.02\% & 46.20\%& 12.23\% \\\hline
    \end{tabular}
    \label{tab:new_appendix_acc_results}
    \vspace{-2ex}
\end{table*}

\begin{table*}[!t]
    \caption{Noise scale performance of \tool and baselines. ``Rate'' means the noise reduction rate of \tool compared to each SOTA work.}
    \centering
    \scriptsize
    \setlength{\tabcolsep}{3pt}
    \begin{tabular}{c|c|c|c c c c c c|c c c c c c}
    \hline
    \multicolumn{3}{c|}{Scenarios} & \multicolumn{6}{c|}{\textbf{ResNet-18@CIFAR-10}}& \multicolumn{6}{c}{\textbf{ResNet-18@CIFAR-100}}\\\hline
    \multicolumn{3}{c|}{Method} & \makecell{\tool \\ (Ours)} & Full DP & \makecell{Time-\\Varying DP} & \makecell{Sensitive \\ DP} & DPA LDP & \makecell{AdapLDP} & \makecell{\tool \\ (Ours)} & Full DP & \makecell{Time-Varying \\DP} & \makecell{Sensitive \\ DP} & DPA LDP & \makecell{AdapLDP} \\\hline
    \multirow{8}{*}{$\epsilon$} & \multirow{2}{*}{0.2} & Noise Scale & 275,447 & 727,901& 862,998 & \textbf{242,583}& 386,453 & 735,138 & 302,859 & 787,542 & 532,786& \textbf{281,766} & 379,128 & 762,930\\
    & & Rate (\%)& - & 62.16 & 68.08 & -13.55 & 28.72 & 62.53 & -& 61.54 & 43.16 & -7.49 & 20.12 & 60.30 \\\cline{2-15}
    & \multirow{2}{*}{0.3} & Noise Scale & \textbf{132,365}& 505,266& 295,231& 165,234 & 261,698 & 512,366 & 167,884 & 543,256 & 324,836& \textbf{127,458} & 274,038 & 538,719\\
    & & Rate (\%)& - & 73.80 & 55.17 & 19.89 & 49.42 & 74.17 & -& 69.10 & 48.32 & -31.72 & 38.74 & 68.84\\\cline{2-15}
    & \multirow{2}{*}{0.4} & Noise Scale & \textbf{87,563} & 383,945& 224,129& 95,468 & 241,778 & 402,970 & 101,256 & 436,562 & 192,879& \textbf{97,145} & 239,767 & 399,175\\
    & & Rate (\%)& - & 77.19 & 60.93 & 8.28 & 63.78 & 78.27 & -& 76.81 & 47.50 & -4.23 & 57.77 & 74.63 \\\cline{2-15}
    & \multirow{2}{*}{0.5} & Noise Scale & \textbf{51,658} & 203,968& 193,487& 53,997 & 156,947 & 217,344 & 94,253 & 267,569 & 122,365& \textbf{88,766} & 142,834 & 232,067\\
    & & Rate (\%)& - & 74.67 & 73.30 & 4.33 & 67.09 & 76.23 & -& 64.77 & 22.97 & -6.18 & 34.01 & 59.39\\\hline
    \multicolumn{3}{c|}{\makecell{Average \\ Reduction Rate}} & - & 71.96\% & 64.37\% & 4.74\% & 52.25\% & 72.80\% & -& 68.06\% & 40.49\% & -12.40\% & 37.66\% & 65.79\% \\\hline
    \multicolumn{3}{c|}{Scenarios} & \multicolumn{6}{c|}{\textbf{CNN@CIFAR-10}}& \multicolumn{6}{c}{\textbf{CNN@CIFAR-100}}\\\hline
    \multicolumn{3}{c|}{Method} & \makecell{\tool \\ (Ours)} & Full DP & \makecell{Time-\\Varying DP} & \makecell{Sensitive \\ DP} & DPA LDP & \makecell{AdapLDP} & \makecell{\tool \\ (Ours)} & Full DP & \makecell{Time-Varying \\DP} & \makecell{Sensitive \\ DP} & DPA LDP & \makecell{AdapLDP} \\\hline
    \multirow{8}{*}{$\epsilon$} & \multirow{2}{*}{0.2} & Noise Scale & \textbf{21,747} & 61,523& 44,752& 22,754 & 31,052 & 63,489 & \textbf{22,968} & 77,523 & 50,803& 23,854 & 35,603 & 72,781\\
    & & Rate (\%)& - & 64.65 & 51.41 & 4.43 & 29.97 & 65.75 & -& 70.37 & 54.79 & 3.71 & 35.49 & 68.44 \\\cline{2-15}
    & \multirow{2}{*}{0.3} & Noise Scale & \textbf{15,870} & 48,236& 30,983& 17,038 & 24,647 & 45,913 & \textbf{19,297} & 52,652 & 34,114& 21,271 & 37,329 & 56,700\\
    & & Rate (\%)& - & 67.10 & 48.78 & 6.86 & 35.61 & 65.43 & -& 63.35 & 43.43& 9.28 & 48.31 & 65.97\\\cline{2-15}
    & \multirow{2}{*}{0.4} & Noise Scale & \textbf{8,798} & 34,717& 18,156& 9,047 & 14,029 & 36,140 & \textbf{9,028} & 42,236 & 23,578& 9,548 & 27,668 & 44,469\\
    & & Rate (\%)& - & 74.66 & 51.54 & 2.75 & 37.29 & 75.66 & -& 78.62 & 61.71 & 5.45 & 67.37 & 79.70\\\cline{2-15}
    & \multirow{2}{*}{0.5} & Noise Scale & \textbf{7,166} & 19,852& 11,549& 8,378 & 9,033 & 20,149 & \textbf{8,677} & 25,786 & 18,270& 8,933 & 10,119 & 18,574\\
    & & Rate (\%)& - & 63.90 & 37.94 & 14.47 & 20.67 & 64.43 & -& 66.35 & 52.51 & 2.87 & 14.25 & 53.28\\\hline
    \multicolumn{3}{c|}{\makecell{Average \\ Reduction Rate}} & - & 67.58\% & 47.42\% & 7.12\% & 30.88\% & 67.82\% & -& 69.68\% & 53.11\% & 5.33\% & 41.35\% & 66.85\% \\\hline
    \end{tabular}
    \label{tab:new_appendix_noise_results}
\end{table*}

\begin{table*}[!t]
    \centering
    \caption{Statistical significance (p-values) of different metrics between \tool and baselines.}
    \scriptsize
    \begin{threeparttable}
    \begin{tabular}{c|l|c|c|c|c|c}
    \hline
    \textbf{Metrics} & \textbf{Scenario} & \textbf{Full DP} & \textbf{\makecell{Time-Varying \\ DP}} & \textbf{Sensitive DP} & \textbf{DPA LDP} & \textbf{AdapLDP} \\\hline
    \multirow{4}{*}{Accuracy} & ResNet-18@CIFAR-10 & 1.54e-6 & 2.64e-5 & 4.61e-5 & 3.02e-4 & 4.23e-2\\
    & ResNet-18@CIFAR-100 & 1.86e-6 & 2.64e-5 & 1.67e-2 & 2.26e-2 & 1.27e-1\\
    & CNN@CIFAR-10 & 2.73e-3 & 1.93e-2 & 3.74e-2 & 2.35e-2 & 5.34e-2\\
    & CNN@CIFAR-100 & 3.21e-4 & 9.35e-3 & 7.34e-3 & 5.91e-3 & 7.97e-2\\\hline
    & Average & 7.64e-4 & 7.18e-3 & 1.54e-2 & 1.31e-2 & 7.56e-2\\\hline
    \multirow{4}{*}{Noise scale} & ResNet-18@CIFAR-10 & 2.20e-6 & 3.19e-4 & 5.59e-1 & 2.73e-3 & 2.20e-6\\
    & ResNet-18@CIFAR-100 & 2.20e-6 & 2.27e-3 & 2.35e-1 & 1.67e-2 & 2.20e-6\\
    & CNN@CIFAR-10 & 2.64e-5 & 2.49e-3 & 2.83e-1 & 2.04e-2 & 2.24e-5\\
    & CNN@CIFAR-100 & 1.54e-6 & 5.64e-4 & 3.37e-1 & 3.19e-4 & 2.38e-4\\\hline
    & Average & 8.09e-6 & 1.41e-3 & 3.54e-1 & 1.01e-2 & 6,62e-5\\\hline
    \multirow{4}{*}{\makecell{Accumulative \\ privacy budget}} & ResNet-18@CIFAR-10 & 2.64e-5 & 2.24e-6 & 3.49e-3 & 4.67e-6 & 2.64e-5 \\
    & ResNet-18@CIFAR-100 & 2.64e-5 & 1.54e-6 & 1.85e-2 & 1.86e-6 & 2.64e-5\\
    & CNN@CIFAR-10 & 2.64e-5 & 5.58e-6& 3.64e-2 & 3.24e-6 & 2.64e-5\\
    & CNN@CIFAR-100 & 2.64e-5  & 1.55e-6& 3.32e-2 & 6.68e-6 & 2.64e-5\\\hline
    & Average & 2.64e-5  & 2.73e-6 & 2.29e-2 & 4.11e-6 & 2.64e-5\\\hline
    \end{tabular}
    \label{tab:pvalue}
    \begin{tablenotes}
        \footnotesize
        \item[*] All p-values were computed based on experimental data from the final five rounds. For each algorithm and scenario, the final five rounds of data were selected to enhance statistical power and reliability, thus mitigating the risk of failing to detect genuine differences due to very small sample sizes.
      \end{tablenotes}
    \end{threeparttable}
\end{table*}

\color{black}
\section{Mathematical Support of the sequential accountant method in Section~\ref{sec:accumulate_privacy_budget}}
\label{app:theoretical_background}

The methods employed in Section~\ref{sec:accumulate_privacy_budget} are grounded in a fundamental theorem from the Composition Theorems of Differential Privacy theory, as illustrated in Theorem~\ref{thm:naive_composition}. The detailed proof of this theorem can be found in reference~\cite{dwork2014algorithmic}.
\begin{theorem}
Let $\mathcal{M}_i:\mathbb{N}^{|\mathcal{X}|}\to\mathcal{R}_i$ be an $(\epsilon_i, 0)$-DP algorithm for $i \in [k]$. Then if $\mathcal{M}_{[k]}:\mathbb{N}^{|\mathcal{X}|}\to\Pi_{i=1}^k\mathcal{R}_i$ is defined to be $\mathcal{M}_{[k]}(x)=(\mathcal{M}_1(x), ..., \mathcal{M}_k(x))$, then $\mathcal{M}_{[k]}$ is $(\sum_{i=1}^k\epsilon_i, 0)$-Differentially Private.
    \label{thm:naive_composition}
\end{theorem}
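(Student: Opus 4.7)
The plan is to establish the claim by directly comparing the output distributions of $\mathcal{M}_{[k]}$ on a pair of neighboring datasets $x$ and $x'$, exploiting the fact that the theorem implicitly treats the $k$ mechanisms as being run independently on the same input. Under this independence, the joint distribution of $(\mathcal{M}_1(x),\ldots,\mathcal{M}_k(x))$ factorizes as a product of the $k$ marginal distributions, which converts the composition question into a pointwise product of individual likelihood ratios. This factorization is the single structural fact that drives the entire argument.

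Concretely, for any measurable set $S \subseteq \prod_{i=1}^{k}\mathcal{R}_i$, I would write
$$\Pr[\mathcal{M}_{[k]}(x) \in S] \;=\; \int_S \prod_{i=1}^{k} p_i(r_i;x)\,d\mu(r_1,\ldots,r_k),$$
where $p_i(\cdot;x)$ denotes the density (or probability mass) of $\mathcal{M}_i(x)$ relative to a dominating product measure $\mu$. Invoking the $(\epsilon_i,0)$-DP hypothesis of each $\mathcal{M}_i$ gives the pointwise bound $p_i(r_i;x) \le e^{\epsilon_i}\, p_i(r_i;x')$, and multiplying across $i$ and integrating over $S$ yields
$$\Pr[\mathcal{M}_{[k]}(x)\in S] \;\le\; e^{\sum_{i=1}^{k}\epsilon_i}\,\Pr[\mathcal{M}_{[k]}(x')\in S],$$
which is precisely the $(\sum_i \epsilon_i, 0)$-DP condition (the $\delta$ term vanishing because every inequality above is deterministic).

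The main obstacle, if one insists on rigor in a general measurable setting, is justifying the pointwise ratio bound from the set-wise statement of $\epsilon_i$-DP. The cleanest fix is to phrase everything in terms of privacy loss random variables: define $L_i(r_i) = \ln\bigl(p_i(r_i;x)/p_i(r_i;x')\bigr)$, observe that pure $\epsilon_i$-DP forces $|L_i|\le \epsilon_i$ almost surely, and then note that because the mechanisms are independent the privacy loss of the composed mechanism is the sum $\sum_{i=1}^{k} L_i$, whose magnitude is bounded by $\sum_{i=1}^{k}\epsilon_i$ almost surely. Alternatively, one can approximate arbitrary measurable events by countable unions of product rectangles and invoke a standard $\pi$–$\lambda$ or monotone-class argument to lift the pointwise bound from rectangles to all measurable sets. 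Either route delivers the conclusion, but the privacy-loss-random-variable formulation makes the $\delta=0$ case especially transparent and extends naturally to the approximate-DP composition results used elsewhere in the paper.
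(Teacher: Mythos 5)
Your proof is correct and is essentially the standard argument: the paper itself does not prove Theorem~\ref{thm:naive_composition} but defers to the cited Dwork--Roth reference, whose proof is exactly your product-of-likelihood-ratios computation (equivalently, the additivity of the privacy loss random variables under independence). Your added care about lifting the set-wise DP condition to a pointwise density-ratio bound is a reasonable refinement of that same approach, not a different route.
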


This theorem can be further extended to $(\epsilon, \delta)$-DP, as detailed in Corollary~\ref{coro:naive_composition}.
\begin{corollary}[Navie Composition Theorem]
Let $\mathcal{M}_i:\mathbb{N}^{|\mathcal{X}|}\to\mathcal{R}_i$ be an $(\epsilon_i, \delta_i)$-DP algorithm for $i \in [k]$. Then if $\mathcal{M}_{[k]}:\mathbb{N}^{|\mathcal{X}|}\to\Pi_{i=1}^k\mathcal{R}_i$ is defined to be $\mathcal{M}_{[k]}(x)=(\mathcal{M}_1(x), ..., \mathcal{M}_k(x))$, then $\mathcal{M}_{[k]}$ is $(\sum_{i=1}^k\epsilon_i, \sum_{i=1}^k\delta_i)$-Differentially Private.
    \label{coro:naive_composition}
\end{corollary}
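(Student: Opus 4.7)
The plan is to prove the result by induction on $k$, reducing the general case to the composition of two mechanisms. The base case $k=1$ is immediate. For the inductive step, it suffices to show the two-mechanism case: if $\mathcal{N}_1$ is $(\alpha, \beta)$-DP and $\mathcal{N}_2$ is $(\epsilon_k, \delta_k)$-DP, then $(\mathcal{N}_1, \mathcal{N}_2)$ is $(\alpha + \epsilon_k,\, \beta + \delta_k)$-DP; instantiating $\mathcal{N}_1 = \mathcal{M}_{[k-1]}$ and $\mathcal{N}_2 = \mathcal{M}_k$ closes the induction.

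For the two-mechanism case, my first attempt would be the tower property: fix neighboring $x, x'$ and a measurable $S \subseteq \mathcal{R}_1 \times \mathcal{R}_2$, and write $\Pr[(\mathcal{N}_1(x), \mathcal{N}_2(x)) \in S] = \mathbb{E}_{r_1 \sim \mathcal{N}_1(x)}[h(r_1)]$ with $h(r_1) := \Pr[\mathcal{N}_2(x) \in S_{r_1}]$ and $S_{r_1} = \{r_2 : (r_1, r_2) \in S\}$. Applying $(\epsilon_k, \delta_k)$-DP of $\mathcal{N}_2$ pointwise bounds $h(r_1) \le e^{\epsilon_k} h'(r_1) + \delta_k$ where $h'(r_1) := \Pr[\mathcal{N}_2(x') \in S_{r_1}] \in [0,1]$. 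The layer-cake identity $\mathbb{E}[h'] = \int_0^1 \Pr[h' > t]\, dt$, combined with $(\alpha, \beta)$-DP of $\mathcal{N}_1$ applied to each superlevel set $\{h' > t\}$, yields $\mathbb{E}_{\mathcal{N}_1(x)}[h'] \le e^{\alpha} \mathbb{E}_{\mathcal{N}_1(x')}[h'] + \beta$. Chaining the two bounds produces only the slightly loose guarantee $(\alpha + \epsilon_k,\, e^{\epsilon_k}\beta + \delta_k)$-DP, which falls short of the claimed $(\alpha + \epsilon_k,\, \beta + \delta_k)$.

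To tighten $e^{\epsilon_k}\beta + \delta_k$ to $\beta + \delta_k$, I would invoke the standard bad-event characterization of approximate DP (Kasiviswanathan--Smith; Dwork--Roth, Lemma 3.17): $\mathcal{M}$ is $(\epsilon, \delta)$-DP iff for every neighboring pair there exists an event $B$ with $\Pr[\mathcal{M}(x) \in B] \le \delta$ such that on $B^c$ the Radon--Nikodym derivative $dP_{\mathcal{M}(x)}/dP_{\mathcal{M}(x')}$ is pointwise bounded by $e^{\epsilon}$. Applied to each $\mathcal{M}_i$, this yields bad sets $B_i \subseteq \mathcal{R}_i$ with $\Pr[\mathcal{M}_i(x) \in B_i] \le \delta_i$. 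Lifting each to the cylinder $\widehat{B}_i \subseteq \prod_{j=1}^{k} \mathcal{R}_j$ and setting $B = \bigcup_{i} \widehat{B}_i$, the coordinate independence of $\mathcal{M}_{[k]}(x)$ together with a union bound gives $\Pr[\mathcal{M}_{[k]}(x) \in B] \le \sum_i \delta_i$. On $B^c$ every coordinate satisfies the pointwise ratio bound $e^{\epsilon_i}$, so Theorem~\ref{thm:naive_composition} (pure-DP composition, already in hand) delivers the joint pointwise ratio bound $e^{\sum_i \epsilon_i}$ there. Splitting any measurable $S$ as $(S \cap B) \sqcup (S \cap B^c)$ and estimating each piece separately then delivers $(\sum_i \epsilon_i,\, \sum_i \delta_i)$-DP.

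The main obstacle is establishing the bad-event characterization itself. The naive candidate $B = \{r : p_{\mathcal{M}(x)}(r) > e^{\epsilon} p_{\mathcal{M}(x')}(r)\}$, substituted into the defining DP inequality, yields only $\int_B (p_{\mathcal{M}(x)} - e^{\epsilon} p_{\mathcal{M}(x')}) \le \delta$ rather than the desired $\Pr[\mathcal{M}(x) \in B] \le \delta$, so one must either construct $B$ more delicately (trimming the extreme tails of the privacy-loss random variable) or route through the privacy-loss-distribution formulation. Once this lemma is in hand, the union-bound plus pure-DP composition argument above concludes the proof without further difficulty.
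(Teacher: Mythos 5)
The paper gives no proof of this corollary at all: it states the pure-DP composition theorem with a citation to Dwork and Roth and then simply asserts the $(\epsilon,\delta)$ extension, so there is no in-paper argument to measure yours against. On its own terms, your outline is the standard one and is sound: the induction to the two-mechanism case is fine, and the bad-event route does deliver the tight $\sum_i\delta_i$. But you have correctly located the crux and then left it open --- the characterization lemma is the entire content of that route, the naive bad set $\{p>e^{\epsilon}q\}$ only controls $\int_B(p-e^{\epsilon}q)$ rather than $P(B)$, and until the Kasiviswanathan--Smith decomposition is proved or explicitly imported your argument is incomplete. Two further small points on that route: the union bound over cylinder bad sets and the product form of the pointwise ratio bound both rely on the coordinates of $\mathcal{M}_{[k]}(x)$ being independent given $x$, which holds here only because the composition is non-adaptive and should be said; and Theorem~\ref{thm:naive_composition} as stated is a claim about probabilities of events, not a pointwise density-ratio bound, so you would be reproving the pointwise product bound rather than quoting that theorem.

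The detour is avoidable, because your first ``tower-property'' attempt can be repaired to give $\beta+\delta_k$ directly; the loss of $e^{\epsilon_k}\beta$ comes only from the order in which you apply the two guarantees. Pull the $\delta_k$ out first and cap at one: since $h(r_1)\le\min\bigl(1,\,e^{\epsilon_k}h'(r_1)+\delta_k\bigr)\le\delta_k+g(r_1)$ with $g:=\min(1,e^{\epsilon_k}h')\in[0,1]$, the layer-cake identity together with the $(\alpha,\beta)$-guarantee of $\mathcal{N}_1$ applied to each superlevel set of $g$ gives
\begin{equation*}
\mathbb{E}_{\mathcal{N}_1(x)}[g]=\int_0^1 \Pr_{r_1\sim\mathcal{N}_1(x)}\bigl[g(r_1)>t\bigr]\,dt
\le e^{\alpha}\,\mathbb{E}_{\mathcal{N}_1(x')}[g]+\beta
\le e^{\alpha+\epsilon_k}\Pr\bigl[(\mathcal{N}_1(x'),\mathcal{N}_2(x'))\in S\bigr]+\beta.
\end{equation*}
Because the integral runs only over $t\in[0,1]$, the additive $\beta$ is not multiplied by $e^{\epsilon_k}$, and summing the two contributions yields exactly $(\alpha+\epsilon_k,\ \beta+\delta_k)$ with no auxiliary lemma. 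This is essentially the textbook proof in Appendix~B of Dwork and Roth, and I would recommend it over the bad-event machinery for a result this elementary.
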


The sequential accountant method is a statistical approach based on Corollary~\ref{coro:naive_composition}. In particular, if the Gaussian noise injected during the training process satisfies the condition in Theorem~\ref{thm:2}, each noise injection step guarantees $(q\epsilon, q\delta)$-DP, where $q$ is the sampling probability based on the batch size. After $T$ iterations, various advanced composition theorems can be applied to derive different privacy bounds, such as the $(qT\epsilon, qT\delta)$-DP. This framework allows a rigorous quantification of privacy guarantees in iterative training processes under differential privacy. 

The detailed quantitative steps of the sequential accountant method used in Section~\ref{sec:accumulate_privacy_budget} are as follows:
\begin{enumerate}
    \item Determine the specific privacy budget consumed by the algorithm during the protection process by deriving the actual variance of the injected noise distribution.
    \begin{equation}
        \sigma = \sqrt{2\log\frac{1.25}{\delta}}/\epsilon.
    \end{equation}
    \item Determine the $q$ by the batch size of the training process.
    \begin{equation}
        q = \frac{Batch\_Size}{|\mathcal{D}_i|}.
    \end{equation}
    \item Determine the iteration number $T$.
    \item Determine the cumulative privacy budget by the sequential accountant method.
    \begin{equation}
        O(\epsilon) = qT\epsilon.
    \end{equation}
\end{enumerate}

\section{More Experimental Results}

\label{app:more_results}

We present all the numerical results of \tool and SOTA works (including Full DP, Time-Varying DP, and Sensitive DP) across all experimental scenarios. Table~\ref {tab:new_appendix_acc_results} shows the final accuracies of each method under different scenarios as well as the accuracy improvement rates of \tool compared to SOTA works. Table~\ref {tab:new_appendix_noise_results} shows the total noise scales of each method under different scenarios as well as the reduction ratios of \tool compared to SOTA works.

Based on Table~\ref{tab:new_appendix_acc_results}, we can calculate the average improvement rate of \tool compared to each SOTA work. Using the average improvement rate compared to Full DP as an example, we averaged all the average improvement rates across all experimental scenarios as follows:

\begin{small}
\begin{equation}
    (36.23\% + 153.26\% + 94.96\% + 660.85\%)/4 = 236.32\%, 
\end{equation}
\end{small}
which is consistent with the value in the Introduction. 

Similarly, we have the average improvement rate compared to Time-Varying DP, Sensitive DP, DPA LDP, and AdapLDP as 144.93\%, 102.42\%, 25.18\%, and 6.10\%. Then, we averaged the improvement rates of these three algorithms to obtain the overall average improvement value:

\begin{small}
    \begin{equation}
     (236.32\% + 144.93\% + 102.42\% + 25.18\% + 6.10\%)/5 = 102.99\%.
\end{equation}
\end{small}

We can calculate the average noise reduction rate with the same method based on Table~\ref{tab:new_appendix_noise_results}. Using the average reduction rate compared to Full DP as an example, we averaged all the average reduction rates across all experimental scenarios as follows:

\begin{small}
\begin{equation}
        (71.96\% + 68.06\% + 67.58\% + 69.68\%)/4 = 69.32\%,
\end{equation}
\end{small}
which is consistent with the value in the Introduction. 

Similarly, we have the average reduction rate compared to Time-Varying DP, Sensitive DP, DPA LDP, and AdapLDP as 51.35\%, 1.20\%, 40.54\%, and 68.31\%. Then, we averaged the reduction rates of these three algorithms to obtain the overall average reduction value:

\begin{small}
\begin{equation}
     (69.32\% + 51.35\% + 1.20\% + 40.54\% + 68.31\%)/5 = 46.14\%.
\end{equation}
\end{small}

\begin{table}[!t]
    \centering
    \caption{Accuracy and training durations performance of \tool and baselines when training ResNet-152 on CIFAR-100.}
    \scriptsize
    \setlength{\tabcolsep}{1pt}
    \begin{tabular}{c|c|c|c|c|c|c|c}
    \hline
    \multicolumn{2}{c|}{Method} & \makecell{\tool \\ (Ours)} & Full DP & \makecell{Time-\\Varying DP} & \makecell{Sensitive \\ DP} & DPA LDP & \makecell{AdapLDP} \\ \hline
    \multirow{2}{*}{$\epsilon=0.2$} & Acc. (\%) & 82.44 & 45.18 & 62.99 & 68.43 & 72.05 & 75.31 \\
    & Duration (s) & 257,104 & 239,816 & 242,993 & 247,302 & 254,731 & 263,998 \\\hline
    \multirow{2}{*}{$\epsilon=0.5$} & Acc. (\%) & 89.40 & 72.33 & 77.59 & 74.81 & 83.57 & 86.12 \\
    & Duration (s) & 256,943 & 239,407 & 243,434 & 247,759 & 254,003 & 265,376\\\hline
    \end{tabular}
    \label{tab:res152}
\end{table}

\begin{table}[!t]
    \centering
    \caption{The comparative quality of privacy data generation by adversaries when confronting \tool versus baseline privacy-preserving algorithms across diverse privacy data types.}
    \scriptsize
    \setlength{\tabcolsep}{3.5pt}
    \begin{threeparttable}
    \begin{tabular}{c|l|c c c c c c}
    \hline
     \multirow{2}{*}{\makecell{Privacy \\ Budget}} &\multirow{2}{*}{Method} & \multicolumn{6}{c}{\textbf{FID score across diverse labels}}  \\\cline{3-8}
     & & \makecell{car} & \makecell{horse} & \makecell{\textcolor{red}{bird}} & \makecell{A} & \makecell{e} & \makecell{\textcolor{red}{5}}\\\hline
     - & No protection & 33.98 & 40.16 & 52.77 & 21.93 & 17.40 & 24.35 \\\hline
     \multirow{6}{*}{0.2}& Full DP & 79.14 & 77.38 & 144.62 & 47.95 & 44.14 & 83.06\\

     & Time-Varying DP & 67.89 & 69.13 & 108.37 & 39.22 & 30.15 & 59.38\\

     & Sensitive DP & 70.44 & 78.86 & 139.02 & 47.37 & 43.99 & 80.57\\

     & DPA LDP & 58.44 & 59.30 & 119.79 & 36.35 & 31.44 & 72.80\\

     & AdapLDP & 42.37 & 48.29 & 79.36 & 33.63 & 34.52 & 46.39\\

     & \tool & 68.33 & 72.79 & 147.22 & 34.80 & 28.66 & 93.77 \\\hline
     \multirow{6}{*}{0.5}& Full DP & 49.33 & 42.07 & 76.68 & 26.79 & 28.40 & 49.14\\

     & Time-Varying DP & 51.29 & 43.33 & 67.92 & 24.57 & 25.88 & 37.42\\

     & Sensitive DP & 53.77 & 50.98 & 81.49 & 34.05 & 33.91 & 62.86\\

     & DPA LDP & 46.25 & 45.58 & 62.46 & 22.73 & 21.99 & 44.50\\

     & AdapLDP & 39.42 & 44.41 & 55.72 & 20.80 & 24.39 & 29.17\\

     & \tool & 46.79 & 48.03 & 94.82 & 27.74 & 37.46 & 72.48\\\hline
    \end{tabular}
    \begin{tablenotes}
        \footnotesize
        \item[*] The FID scores are computed based on 10,000 adversarially reconstructed privacy images per label, with all tabulated experimental results representing averages derived from three independent runs.
        \item[*] Higher FID scores indicate greater semantic incoherence in generated images, signifying enhanced privacy protection performance that effectively prevents adversaries from successfully reconstructing private data.
      \end{tablenotes}
    \end{threeparttable}
    \label{tab_attack_gen_quality}
\end{table}

We present the results of the statistical significance (p-values) of prediction accuracy comparisons, noise scale comparisons, and cumulative privacy budget comparisons between \tool and baselines in Table~\ref{tab:pvalue}.

The experimental results of the ResNet-152 are presented in Table~\ref{tab:res152}, demonstrating the scalability of \tool on the deep model.

Table~\ref{tab_attack_gen_quality} presents the FID scores when confronting the \tool's defense and baselines' defense mechanisms, illustrating the quality of the reconstructed privacy data, thereby showcasing the protection capabilities of \tool and baselines.

\newpage

 




\vfill

\end{document}